\documentclass{article}

\usepackage{parskip}
\usepackage[margin=1in]{geometry}

\usepackage[utf8]{inputenc} 
\usepackage[T1]{fontenc}
\usepackage{hyperref}
\usepackage[cmex10]{mathtools}
\usepackage{amsthm}
\usepackage{enumitem}
\usepackage{microtype}
\usepackage{natbib}

\mathtoolsset{showonlyrefs}

\usepackage[sfdist]{rrmacros}

\newcommand{\eps}{\epsilon}

\newcommand{\N}{\bbN}
\newcommand{\R}{\bbR}
\newcommand{\psd}{\bbS_+}
\newcommand{\pd}{\bbS_{++}}
\newcommand{\id}{\mathrm{I}}

\DeclareMathOperator{\vol}{vol}
\newdist{\normal}{N}
\newcommand{\gaus}{\mathfrak{g}}
\newcommand{\bmeas}{P}

\newdist{\unif}{Unif}
\newcommand{\sym}{\mathrm{Sym}}
\newcommand{\blank}{{\nonscript\,\cdot\nonscript\,}}
\newcommand{\wA}{\widehat{A}}

\newtheorem{thm}{Theorem}
\newtheorem{prop}{Proposition}
\newtheorem{lem}{Lemma}

\theoremstyle{definition}

\newtheorem{condition}{Condition}
\newtheorem{definition}{Definition}
\newtheorem{remark}{Remark}

\title{Statistical Limits for Finite-Rank Tensor Estimation}

\author{Riccardo Rossetti\textsuperscript{*} and Galen Reeves\textsuperscript{*\textdagger}} \date{\vspace{-.4cm}\small{\textsuperscript{*}Department of Statistical Science, Duke University \\ \textsuperscript{\textdagger}Department of Electrical and Computer Engineering, Duke University}}

\begin{document}
\maketitle

\begin{abstract} 
This paper provides a unified framework for analyzing tensor estimation problems that allow for nonlinear observations, heteroskedastic noise, and covariate information. We study a general class of high-dimensional models where each observation depends on the interactions among a finite number of unknown parameters. Our main results provide asymptotically exact formulas for the mutual information (equivalently, the free energy) as well as the minimum mean-squared error in the Bayes-optimal setting. We then apply this framework to derive sharp characterizations of statistical thresholds for two novel scenarios: (1) tensor estimation in heteroskedastic noise that is independent but not identically distributed, and (2) higher-order assignment problems, where the goal is to recover an unknown permutation from tensor-valued observations. 
\end{abstract}

\tableofcontents

\section{Introduction}
The spiked tensor model, introduced by Montanari and Richard \cite{montanari:2014a-statistical}, is a model for higher-order principal component analysis where tensor-valued observations have the form  
\begin{align}
    y_{\alpha}  = \sqrt{\frac{t}{ n^{q-1}} } \theta_{\alpha_1} \cdots \theta_{\alpha_q}  + z_{\alpha} , \quad \alpha \in [n]^q \label{eq:stm}
\end{align}
for unknown real parameters $\theta_1, \dots, \theta_n$ and i.i.d.\ standard Gaussian noise $(z_\alpha)_{\alpha \in [n]^k}$ . The parameter $t \ge 0$ quantifies the signal-to-noise ratio (SNR) and the case $q=2$ is the Sherrington-Kirkpatrick model~\cite{sherrington:1975solvable} from statistical physics. Over the past decade, this model has played a central role in the theoretical analysis of statistical and computational  tradeoffs for high-dimensional inference problems as well as closely related combinatorial optimization problems~\cite{montanari:2014a-statistical,lesieur:2017statistical,kadmon:2018statistical,ben-arous:2019the-landscape,perry:2020statistical}. The  thresholds for detection and estimation are well understood for settings the parameters are uniformly distributed on the sphere $\{ \bmtheta \in \bbR^n  \mid \| \bmtheta\| \le \sqrt{n}\}$ or the hypercube $\{-1,+1\}^n$~\cite{montanari:2014a-statistical,montanari:2015on-the-limitation,lesieur:2017statistical}. 

Despite its popularity on the theoretical side, however, the spiked tensor model \eqref{eq:stm} is limited in the sense that it cannot represent many of the complex behaviors that occur for tensor estimation problems in  statistics and machine learning: often, the underlying parameters are multidimensional, the observations are asymmetric, heterogeneous, possibly non-linear, and there may be additional covariate information about both the parameters and their interactions. Each of these considerations can have substantial impact on the statistical and computational thresholds of these problems. 

\paragraph{Our contribution.}
We propose a general model for $q$-wise interactions and prove asymptotically exact formulas characterizing its statistical limits. In this model, parameters $\theta_1, \dots, \theta_n$ taking values in generic parameter space $\Theta$ are observed via 
\begin{align}
    y_{\alpha}  = \sqrt{\frac{t}{ n^{q-1}} } f( x_{\alpha_1}, \dots, x_{\alpha_q}, \theta_{\alpha_1}, \dots,  \theta_{\alpha_q})  + z_{\alpha}, \qquad \alpha \in [n]^q 
    \label{eq:qim}
\end{align}
where $x_1, \dots, x_n$ are observed covariates belonging to set $\cX$,  $f \colon \cX^q \times \Theta^q \to \bbR^m$ is a known function, and $z_\alpha \sim_\iid \normal(0,\id_m)$ are i.i.d.\ noise vectors independent of everything else. 

The goal of this paper is to provide a theoretical framework that unifies the analysis of this broad class of models, and at the same time allows for arbitrary orders of interaction.

We study the high-dimensional limit where $(m,q, \cX, \Theta, f)$ are fixed  as $n \to \infty$. 
The main object of interest in the \emph{free energy} associated with a sequence of prior distributions $\bmeas^n$ on the $\cX^n \times \Theta^n$ satisfying certain regularity conditions. 
Following well-established techniques in the literature~\cite{iba:1999the-nishimori,nishimori:2001statistical,mezard:2009information}, formulas for the asymptotic free energy can be combined with perturbation arguments (e.g., small changes in the SNR parameter $t$) to identify the Bayes risk under squared error loss, i.e., the minimum mean-squared error. 

Our approach, which is described in Section~\ref{s:factor_model},  consists of two basic steps. 
\begin{itemize}
\item \textbf{Multilinear approximation:} we introduce a multilinear approximation to \eqref{eq:qim} that depends on a \emph{finite} representation of the covariates $\psi \colon \cX \to [k]$ and a  \emph{finite dimensional} representation of the parameters $\phi \colon \Theta \to \bbR^d$. Theorem~\ref{th:approx} establishes continuity of the free energy associated with these approximations with respect to the quadratic Wasserstein distance between statistical models. The proof, which uses the HWI inequality of Otto and Villani~\cite{otto:2000generalization}, both simplifies and strengthens related continuity results in the literature.  

\item \textbf{Free energy:} we then prove a formula for the asymptotic free energy associated with the multilinear approximation (Theorem~\ref{th:Flimit}). The key idea of the proof is to embed the model into a lifted version of the matrix tensor product model (MTP)~\cite{reeves2020,chen2022} and then use orthogonality properties of the embedding to map back down to a simplified formula defined as max-min of a potential function on the finite dimensional space $(\psd^d)^k \times (\psd^d)^k$. 
\end{itemize}

To illustrate how our approach can be applied to settings that go  beyond the spiked tensor model in \eqref{eq:stm}, we use it to study the statistical limits for two novel problem settings: 
\begin{itemize}
\item  \textbf{Heteroskedastic noise:} We study a version of \eqref{eq:stm} where the Gaussian noise is independent but not identically distributed. Our results generalize prior heteroskedastic matrix models~\cite{behne2022,guionnet:2025low-rank} by allowing for higher-order interaction and removing the assumption that the variance profile is positive definite. 

\item \textbf{Higher-order assignment problems:} The $q$-adic assignment problem introduced by Lawler \cite{lawler:1963the-quadratic} is an extension of the classical quadratic assignment problem ($q=2$) in \cite{koopmans1957} where the goal is to find a permutation of the set $[n]$ that minimizes the discrepancy between two order-$q$ tensors. Similar to recent work on the quadratic setting~\cite{yang2024}, we consider a variation of \eqref{eq:qim} where the parameters represent an unknown permutation. We characterize the limit via a converging sequence of approximations. 
\end{itemize}
We anticipate that our framework can also be applied to other settings involving multi-way comparisons such as multidimensional assignment problems~\cite{pierskalla:1968the-multidimensional,bandelt:1994approximation} and hypergraph clustering~\cite{agarwal:2005beyond,zhou:2006learning}.

\subsection{Related work}
\paragraph{Low-rank matrix models.} Matrix-valued models $(q=2)$ have been studied extensively for applications including  covariance estimation and principal component analysis~\cite{johnstone2001,baik:2005phase,benaych-georges:2011the-eigenvalues,deshpande2014,dia2016,miolane2017,lelarge2019}, as well as community detection with stochastic block models~\cite{abbe:2015recovering,banks:2016information-theoretic,caltagirone:2017recovering,deshpande:2017,abbe:2018}. There has also been progress on richer classes of models allowing for approximately low-rank \cite{lu2022,yang2024} interactions, different types of observations and covariate information \cite{deshpande2018,mayya2019,yang2024}, as well as  heteroskedastic noise  \cite{behne2022,guionnet:2025low-rank,mergny2024,guionnet2024}. 

\paragraph{Theoretical techniques.}
Our work  builds upon and unifies a number of ideas that have appeared previously in the literature. The proof of Theorem~\ref{th:Flimit} depends on  results for the MTP model~\cite{reeves2020,chen2022}, and a simplified version of the lifting argument we employ in this paper was used in \cite{behne2022} for rank-one matrices. Prior work focusing directly on spiked matrix models corresponding to special cases of~\eqref{eq:qim} includes \cite{guionnet:2025low-rank,mergny2024,guionnet2024,yang2024}. Unlike our approach, many of these works restrict the parameter interactions to be positive semidefinite. We also highlight that some of our ideas for approximations to non-linear models based on series expansion have been studied extensively in the general context of low-degree polynomial approximations~\cite{hopkins:2017the-power, hopkins:2017efficient, kunisky:2022notes}.

\paragraph{Computational thresholds.} It is also important to know if and when these statistical limits can be attained with  practical computational complexity. In the $q=2$ setting, there is often a close connection between the free energy formulas and the performance of approximate message passing (AMP) algorithms, pointing to the fact that for the matrix case the entire regime in which estimation is possible is achievable efficiently. However, for the tensor setting with $q > 2$, even AMP-based methods fail to achieve the fundamental limits. While more powerful algorithms have been devised, there remain significant statistical-to-computational gaps~\cite{montanari:2014a-statistical,lesieur:2017statistical,perry:2020statistical,ben-arous:2019the-landscape,ben-arous:2020algorithmic}.
While this paper focuses on the statistical limits, we think the computational limits are an important and complementary direction for future work. 

\paragraph{Tensor estimation methods.} A wide range of techniques has been proposed to estimate the signal $\theta_1,\dots,\theta_n$ from a spiked tensor model~\eqref{eq:stm}. Perhaps the most classical method is the tensor power iteration, a direct analog of the standard power iteration on symmetric matrices whose statistical properties are well-understood~\cite{montanari:2014a-statistical,huang:2022power,wu:2024sharp}. Other approaches with the same statistical performance include gradient descent~\cite{ben-arous:2020algorithmic} and AMP~\cite{montanari:2014a-statistical}. In further developments, it was shown that the power iteration is suboptimal in the sense that there are some signal regimes in which other computationally efficient methods achieve recovery. Among these, we mention sum-of-squares methods~\cite{hopkins:2015tensor,hopkins:2016fast}, higher-order orthogonal iterations~\cite{luo:2021a-sharp}, and modified gradient methods~\cite{anandkumar:2014guaranteed,biroli:2020how-to-iron,han:2022an-optimal}. Development of provably efficient algorithms for more complex interaction structures such as~\eqref{eq:qim} is, to the best of our knowledge, an open problem.

\section{The \texorpdfstring{$q$-wise}{q-wise} interaction model}\label{s:factor_model}
In this paper, we will be considering statistical models consisting of array-valued random variables, which we will be indexing by vectors of integers. We first present some notation that will be used throughout.
\begin{itemize}[leftmargin=2em]
    \item For an integer $n \in \bbN$, we define $[n] \coloneqq \{1,\dots,n\}$;
    \item for an $n$-tuple of elements $\bmv \coloneqq (v_1,\dots,v_n)$ and $\alpha \coloneqq (\alpha_1,\dots,\alpha_q) \in [n]^{q}$, we write $\bmv(\alpha)$ to denote the $q$-tuple of elements $(v_{\alpha_1},\dots,v_{\alpha_q})$;
    \item we use $\inner{}{}$ for the Euclidean inner product, and denote by $\norm{}$ the induced norm. Given two elements in a product space $v,w \in (\bbR^m)^n$, we use the convention $\inner{v}{w} \coloneqq \sum_{i=1}^n \inner{v_i}{w_i}$.
\end{itemize}

Fix two integers $q,m \in \bbN$. For each problem size $n \in \bbN$, let $N \coloneqq n^q$. We define the general \textit{$q$-wise interaction model} as an array of $\bbR^m$-valued observations $\bmy \coloneqq (y_\alpha)_{\alpha \in [n]^q}$ given by
\begin{align}
    y_\alpha = \sqrt{\frac{t}{n^{q-1}}} f(x_{\alpha_1},\dots,x_{\alpha_q}, \theta_{\alpha_1},\dots,\theta_{\alpha_q}) + z_\alpha, \quad \alpha \in [n]^q \label{eq:y_alpha}
\end{align}
where
\begin{itemize}[leftmargin=2em]
    \item $\bmx\in \cX^n$ are observed covariates;
    \item $\bmtheta \in \Theta^n$, are unknown parameters;
    \item $f  : \cX^q  \times \Theta^q \to \bbR^m$ is a known function;
    \item $z_\alpha \in \bbR^m$, $\alpha \in  [n]^q$, are i.i.d. standard Gaussian vectors.
\end{itemize}
To simplify the presentation, for a function $f: \cX^q \times \Theta^q \to \bbR^m$ we define $f^{(n)} : \cX^n \times \Theta^n \to (\bbR^m)^N$ to be the function taking in the covariate and parameter vectors and returning the dimension-scaled array of outputs of $f$ for all indices $\alpha \in [n]^q$. In other words, $f^{(n)}(\bmx,\bmtheta)$ is defined element-wise as
\begin{align}\label{eq:g}
    f_\alpha^{(n)}(\bmx,\bmtheta) \coloneqq \frac{1}{\sqrt{n^{q-1}}}f(x_{\alpha_1},\dots,x_{\alpha_q}, \theta_{\alpha_1},\dots,\theta_{\alpha_q}), \quad \alpha \in [n]^q.
\end{align}
Model  \eqref{eq:y_alpha} is designed to cover a wide variety of high-dimensional factor models studied in the literature. As a consequence of this, the results in this section provide the key tools necessary to study the statistical limits of a wide range of high-dimensional inference models. The flexibility and generality of this approach is demonstrated in Sections~\ref{s:hetero} and~\ref{s:qap}, in which we are able to provide a solution for two apparently unrelated classes of open problems through the use of similar techniques.

\begin{remark}\label{remark:psd} Much the prior work has focused on the special case of pairwise interactions $(q=2)$ under the additional constraint that $f$ is a positive semidefinite kernel.  
In this setting the formulas for the resulting limits can be stated in terms the maximization of a  potential function that has a ``simple'' form. However, because we do not make this assumption, our general results require a more complicated max-min formula that strictly generalizes the positive semidefinite case. As shown in \cite[Proposition~6]{reeves2020}, the max-min formula can be reduced to a single max formula precisely when the positive semidefinite condition holds. See also \cite[Theorem~1.1]{chen2025a} for a discussion of the case $q > 2$.
\end{remark}

\begin{remark}
Based on universality arguments originating in~\cite{korada2011} and subsequently developed in~\cite{lesieur2015,lu2022,hu2022,pesce2023,han2023} one can extend results under the assumed additive Gaussian noise in \eqref{eq:y_alpha} to suitably regular non-Gaussian models; for example,~\cite[Theorem 2.7]{guionnet:2025low-rank} derive a universality result for $q=2$ in the setting studied in Section~\ref{s:hetero}, under an additional positive semidefinite constraint.
\end{remark}

This section contains our main results on the $q$-wise interaction model, and is organized as follows:
\begin{itemize}[leftmargin=2em]
    \item in Section~\ref{ss:setup}, we introduce the probabilistic setting of our problem, as well as all the relevant information-theoretic quantities for our analysis;
    \item in Section~\ref{ss:approx}, we provide an approximation result that allows to reduce the analysis of the $q$-wise interaction model to certain multilinear approximation model, which is defined there;
    \item Section~\ref{ss:F_limit} contains an asymptotically exact characterization of the free energy for the multilinear model that is used to approximate the $q$-wise interaction model;
    \item finally, Section~\ref{ss:overlap} details the connection between free energy and the minimum mean-squared error in a Bayes-optimal estimation framework. 
\end{itemize}
All proofs for this section are provided in Appendix~\ref{s:proof_main}.

\subsection{Bayes-optimal setting} \label{ss:setup}
We focus on the high-dimensional Bayes-optimal setting where, for each problem of size $n \in \bbN$, the unknown parameters $\bmtheta \coloneqq (\theta_i)_{i\in[n]}$ and covariates $\bmx \coloneqq (x_i)_{i\in[n]}$ have known joint distribution $\bmeas^n$ on $\cX^n \times \Theta^n$. 

Throughout, we will use the symbol $\gaus$ to denote the standard Gaussian measure defined on a space whose dimension is appropriate to the context. 
For the pushforward measure of $\bmeas^n$ under function $f^{(n)}$, we write $f^{(n)}_\#\bmeas^n$. 

With these definitions, let us recall the form of the $q$-wise interaction model $\bmy \coloneqq (y_\alpha)_{\alpha \in [n]^q}$,
\begin{align}
    y_\alpha = \sqrt{t} f_\alpha^{(n)}(\bmx,\bmtheta) + z_{\alpha}, \quad \alpha \in [n]^q
\end{align}
whose distribution is given by the convolution $f^{(n)}_\#\bmeas^n * \gaus$. We define the Hamiltonian for this model $\cH^n_{\bmx,\bmtheta,\bmz} : \Theta^n \times [0,\infty) \to \bbR$ as the random function
\begin{align} \label{eq:hamiltonian}
    \cH^n_{\bmx,\bmtheta,\bmz}(\bmtheta';t) \coloneqq t \big(\inner{f^{(n)}(\bmx,\bmtheta')}{\; f^{(n)}(\bmx,\bmtheta)} - \frac{1}{2} \norm{f^{(n)}(\bmx,\bmtheta')}^2\big) + \sqrt{t} \inner{f^{(n)}(\bmx,\bmtheta')}{\bmz}.
\end{align}
We define the (expected) \textit{free energy} $F_n : [0,\infty) \to \bbR$ of $\bmy$ (equivalently, of $\sqrt{t}f^{(n)}_\#\bmeas^n * \gaus$) as
\begin{align}
    F_n(t) \coloneqq \frac{1}{n} \ex[\Big]{\log \int \exp\!\big\{\cH^n_{\bmx,\bmtheta,\bmz}(\bmtheta';t)\big\} \, \bmeas^n(\dd \bmtheta' \mid \bmx) },
\end{align}
where the expectation is taken with respect to all random quantities $(\bmx,\bmtheta,\bmz)$. Note that the above expression is precisely the relative entropy of the conditional distribution of $\bmy \mid \bmx$ with respect to the Gaussian measure.

For Gaussian noise models, recall that the free energy $F_n$ provides an equivalent characterization of the mutual information via the identity
\begin{align}
    I(u + z; u) + \div{\mu * \gaus}{\gaus} = \bbE\norm{u}^2,
\end{align}
for a pair of random variables $(u,z)\sim \mu \otimes \gaus$ defined on the same Euclidean space and $\div{}{}$ denoting the relative entropy.

With this, we find that the mutual information between $\bmtheta$ and $\bmy$ conditionally on $\bmx$ can be equivalently characterized in terms of the free energy $F_n$. In fact, by the chain rule of mutual information
\begin{align}
    I\big(\bmtheta; \bmy \mid \bmx\big) = I\big(f^{(n)}(\bmx,\bmtheta),\bmtheta; \bmy \mid \bmx\big) = I\big(f^{(n)}(\bmx,\bmtheta); \bmy \mid \bmx\big),
\end{align}
where $I(f^{(n)}(\bmx,\bmtheta) ; \bmy \mid \bmtheta,\bmx) = 0$ and furthermore $I(\bmtheta ; \bmy \mid f^{(n)}(\bmx,\bmtheta),\bmx) = 0$ since, conditionally on $\bmx$, $f^{(n)}(\bmx,\bmtheta)$ is a sufficient statistic for $\bmtheta$. In particular, the normalized conditional mutual information satisfies
\begin{align} \label{eq:I_F_identity}
    \frac{1}{n}I\big(\bmy; \bmtheta \mid \bmx\big) = \frac{t}{n} \ex*{\norm[\big]{\ex{f^{(n)}(\bmx,\bmtheta) \mid \bmx}}^2} - F_n(t).
\end{align}

\subsection{Bound on the approximation error} \label{ss:approx}
As a preliminary, let us recall the definition of the \textit{quadratic Wasserstein distance} between two square-integrable measures $\mu,\nu$ on some Euclidean space $\bbR^n$. We have
\begin{align}
    W_2(\mu,\nu) \coloneqq \left( \inf_{\gamma \in \Gamma(\mu,\nu)} \int \norm{v-w}^2 \, \gamma (\dd v,\dd w) \right)^{1/2}, 
\end{align}
where $\Gamma(\mu,\nu)$ is the set of couplings on $\bbR^n \times \bbR^n$ with marginal distributions $\mu$ and $\nu$.

For a generic function $f : \cX^q \times \Theta^q \to \bbR^m$, our final goal is the calculation of the free energy $F_n(t)$ associated to the $q$-wise interaction model
\begin{align}
    \bmy =\sqrt{t} f^{(n)}(\bmx,\bmtheta) + \bmz.
\end{align} 
In our approach, we will consider approximations to $f$ via a specific class of multilinear functions, which are defined as follows.
\begin{definition}
For $d,k \in\bbN$ let $\cF_{d,k}$ be the class of functions $f \colon \cX^q \times \Theta^q \to \bbR^m $ of the form 
\begin{align} \label{eq:multilinear}
    f(x_1,\dots, x_q, \theta_1, \dots, \theta_q) =A_{\psi(x_1)\dots \psi(x_q)} \big( \phi(\theta_1) \otimes \cdots \otimes \phi(\theta_q)\big) 
\end{align}
for bounded measurable functions $\phi \colon \Theta \to \bbR^d$, $\psi \colon \cX \to [k]$, and matrices $A_\kappa \in \bbR^{m \times d^q}$, $\kappa \in [k]^q$. Define $\cF \coloneqq  \cup_{d,k \in \bbN} \cF_{k,d}$.  
\end{definition}

\begin{condition}\label{cond:eps_approx}
For every $\eps > 0$ and $t \in [0,\infty)$, there exists a multilinear approximation $\hat{f} \in \cF$ such that the sequence of functions $\hat{f}^{(n)}\colon \cX^n \times \Theta^n \to (\bbR^m)^{N}$ satisfies
\begin{align}
\limsup_{n \to \infty} \frac{1}{n} \ex*{W_2^2\Big( \big[\sqrt{t}f^{(n)}_\# \bmeas^n * \gaus\big](\blank \mid \bmx), \,  \big[\sqrt{t} \hat f^{(n)}_\# \bmeas^n * \gaus\big](\blank \mid \bmx)} \leq \eps \label{eq:W2eps}.
\end{align}
\end{condition}

\begin{remark}
The assumption that parameter maps $\phi$ are bounded ensures that $\frac{1}{n} \bbE\norm{\hat{f}^{(n)}(\bmx, \bmtheta)}^2$ is bounded uniformly in $n$. Moreover, because convolution is non-expansive under the Wasserstein distance~\cite[Lemma 5.2]{santambrogio:2015optimal}, the Wasserstein bound in \eqref{eq:W2eps} is implied by the simpler but potentially much more restrictive  condition
\begin{align}
\limsup_{n \to \infty} \frac{1}{n} \int \| f^{(n)}- \hat{f}^{(n)} \|^2  \dd \bmeas^{n} \leq \eps. \label{eq:W2eps_simple}
\end{align}
We note that in some settings, including some in the low-SNR regime where $t=O(1)$, convolution with Gaussian measure is highly contractive~\cite{chen:2021asymptotics} and thus \eqref{eq:W2eps} may hold even if \eqref{eq:W2eps_simple} does not. 
\end{remark}
\begin{thm}\label{th:approx}
Assume that Condition~\ref{cond:eps_approx} holds.  Then, for every $\eps >0$ and $t \in [0,\infty)$ there exists a multilinear approximation  $\hat{f} \in \cF$ such that the free energy $\hat{F}_n$ for $(\sqrt{t} \hat f^{(n)}_\#\bmeas^n * \gaus)$ satisfies 
\begin{align}
    \limsup_{n \to \infty} | F_n - \hat{F}_n| \le \eps
\end{align}
\end{thm}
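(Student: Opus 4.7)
The plan is to reduce the comparison of free energies to a comparison of relative entropies against the standard Gaussian reference, and then apply the HWI inequality (as flagged in the introduction) to convert the quadratic Wasserstein closeness supplied by Condition~\ref{cond:eps_approx} into closeness of free energies.

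First, following the identification stated in the paper, if $\mu_n(\blank \mid \bmx) \coloneqq [\sqrt{t} f^{(n)}_\#\bmeas^n * \gaus](\blank \mid \bmx)$ is the conditional law of $\bmy$ given $\bmx$, then $F_n(t) = \tfrac{1}{n}\bbE_{\bmx}[\div{\mu_n}{\gaus}]$, and analogously for the conditional law $\hat\mu_n$ associated with $\hat f$. So the task becomes showing that these averaged relative entropies are close. I would then invoke the HWI inequality of Otto and Villani: since $\gaus$ is $1$-log-concave, one obtains the two-sided bound
\begin{align}
\bigl|\div{\mu_n}{\gaus} - \div{\hat\mu_n}{\gaus}\bigr| \le W_2(\mu_n, \hat\mu_n)\,\sqrt{J(\mu_n \mid \gaus) \vee J(\hat\mu_n \mid \gaus)},
\end{align}
where $J(\blank \mid \gaus)$ is the relative Fisher information with respect to $\gaus$. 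Averaging over $\bmx$ and applying Cauchy--Schwarz decouples the two ingredients:
\begin{align}
\bbE_{\bmx}\bigl|\div{\mu_n}{\gaus} - \div{\hat\mu_n}{\gaus}\bigr| \le \sqrt{\bbE_{\bmx}\bigl[W_2^2(\mu_n, \hat\mu_n)\bigr]}\,\sqrt{\bbE_{\bmx}\bigl[J(\mu_n \mid \gaus) \vee J(\hat\mu_n \mid \gaus)\bigr]}.
\end{align}

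Next, I would control the relative Fisher informations by second moments of the signal, exploiting that both $\mu_n$ and $\hat\mu_n$ are convolutions with $\gaus$. For $Y = X + Z$ with $Z \sim \gaus$, the standard identity $\nabla\log(dp_Y/d\gaus)(y) = \bbE[X \mid Y = y]$ gives $J(p_Y \mid \gaus) = \bbE\norm{\bbE[X\mid Y]}^2 \le \bbE\norm{X}^2$. Applied to our setting, this yields $\bbE_{\bmx}[J(\hat\mu_n \mid \gaus)] \le t\,\bbE\norm{\hat f^{(n)}(\bmx,\bmtheta)}^2 = O(n)$, where the order-$n$ bound is uniform over $\hat f \in \cF$ by the remark following Condition~\ref{cond:eps_approx} (since $\phi$ is bounded); and similarly $\bbE_{\bmx}[J(\mu_n \mid \gaus)] = O(n)$ under the second-moment assumption on $f$ implicit in the model.

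Combining: Condition~\ref{cond:eps_approx} applied with parameter $\eps' > 0$ gives $\bbE_{\bmx}[W_2^2(\mu_n, \hat\mu_n)] \le n\eps'$ asymptotically, while the Fisher-information step gives $\bbE_{\bmx}[J(\mu_n \mid \gaus) \vee J(\hat\mu_n \mid \gaus)] \le Cn$ for a constant $C = C(t, f)$. Dividing through by $n$ yields $\limsup_n |F_n - \hat F_n| \le \sqrt{C\eps'}$, and to meet the target error $\eps$ one simply invokes Condition~\ref{cond:eps_approx} with $\eps' \coloneqq \eps^2/C$. The main obstacle I anticipate is securing the second-moment bound $\bbE\norm{f^{(n)}}^2 = O(n)$ for the original (possibly unbounded) $f$: unlike $\hat f \in \cF$, which is bounded by construction, the original $f$ need not be; this gap should be closable either by an implicit integrability hypothesis on the model or by applying the triangle inequality for $W_2$ to the Dirac measure at the origin, combined with the boundedness of $\hat f^{(n)}$ and Condition~\ref{cond:eps_approx}.
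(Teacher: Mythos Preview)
Your proposal is correct and follows essentially the same route as the paper: both identify the free energy with an averaged relative entropy against the Gaussian reference, invoke the HWI inequality (with $\kappa=1$) applied in both directions, and bound the relative Fisher information by the signal second moment via the score-function identity $\nabla\log\rho^n(\bmy\mid\bmx)=\sqrt{t}\,\bbE[f^{(n)}(\bmx,\bmtheta)\mid\bmx,\bmy]$ (Tweedie's formula). Your explicit use of Cauchy--Schwarz over $\bmx$ and your flagging of the $O(n)$ second-moment bound for the original $f$ (together with a workable fix via the $W_2$ triangle inequality through the origin and boundedness of $\hat f$) are, if anything, slightly more careful than the paper's own write-up, which asserts that bound without further comment.
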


\subsection{Asymptotic free energy} \label{ss:F_limit}
In this section, we establish the exact limit of the free energy for $q$-wise interaction models whose signal components have the multilinear form given in \eqref{eq:multilinear}. For the remainder of this section, we assume that $f^{(n)} \in \cF$, and fix $d,k \in \bbN$, as well as maps $\psi \colon \cX \to [k]$, $\phi \colon \Theta \to \bbR^d$, and matrices $A_\kappa$, $\kappa \in [k]^{q}$. We consider the family of models parametrized by a scalar $t \in [0,\infty)$ given by
\begin{align} \label{eq:y_approx}
\bmy = \sqrt{t}f^{(n)}(\bmx,\bmtheta) + \bmz.
\end{align}
Before proceeding with the statement of our main result for this section, we define the \textit{linear channel model} associated to~\eqref{eq:y_approx}. Given a collection $\mathring S = (\mathring S_1, \dots, \mathring S_k) \in (\psd^d)^k$, we define an additive Gaussian model of the form:
\begin{align} \label{eq:y_ring}
    \mathring{y}_i = \mathring S_{\psi(x_i)}^{1/2} \phi(\theta_i) + \mathring{z}_i , \quad i \in [n]
\end{align}
where $\mathring{z}_i \sim_\iid \normal(0,\id_d)$ are independent of $(\bmx,\bmtheta,\bmz)$. For this linear channel model, we introduce the Hamiltonian, $\mathring\cH_{\bmx,\bmtheta,\mathring\bmz} : \Theta^n \times (\psd^d)^k \to \bbR$ 
\begin{align} \label{eq:hamiltonian_linear}
&\mathring\cH_{\bmx,\bmtheta,\mathring\bmz}^n(\bmtheta'; \mathring S) \coloneqq \sum_{i=1}^n \!\big( \inner{\mathring S_{\psi(x_i)}^{1/2} \phi(\theta_i')}{\mathring S_{\psi(x_i)}^{1/2} \phi(\theta_i) + \mathring{z}_i} - \frac{1}{2} \norm{\mathring S_{\psi(x_i)}^{1/2} \phi(\theta_i')}^2\big).
\end{align} 
With this, the relative entropy $H_n : (\psd^d)^k \to \bbR$ associated with $\mathring\bmy$ can be written as
\begin{align}\label{eq:entropy_linear}
    H_n(\mathring S) \coloneqq \frac{1}{n} \ex[\Big]{\log \int \exp\!\big\{\mathring\cH^n_{\bmx,\bmtheta,\mathring\bmz}(\bmtheta';\mathring S)\big\} \, \bmeas^n(\dd \bmtheta' \mid \bmx) }.
\end{align}
Note that, in the special case where $(x_i,\theta_i)$ are i.i.d.\ from some known measure $\bmeas$ on $\Theta \times \cX$, $H_n$ can be seen not to depend on $n$ and is given in terms of a single-letter formula. 

Let us consider an augmented observation model $\overline\bmy = (\bmy, \mathring \bmy)$ consisting of and observation $\bmy$ of the form~\eqref{eq:y_approx} and the linear channel observations $\mathring \bmy$ in~\eqref{eq:y_ring}. Let us define the conditional density $\rho^n_{\bmx,\bmtheta,\bmz,\mathring\bmz}(\mathring S,t)$ of $\overline \bmy \mid \bmx$ as a function of $(\mathring S,t) \in (\psd^d)^k \times [0,\infty)$: 
\begin{align}
    \rho^n_{\bmx,\bmtheta,\bmz,\mathring\bmz}(\mathring S,t) = \int \exp\!\big\{\cH^n_{\bmx,\bmtheta,\bmz}(\bmtheta';t) + \mathring\cH^n_{\bmx,\bmtheta,\mathring\bmz}(\bmtheta';\mathring S)\big\} \, \bmeas^n(\dd \bmtheta' \mid \bmx).
\end{align}
\begin{condition} \label{as:Flimit} For any $f^{(n)} \in \cF$, the following assumptions hold.
\begin{enumerate}[leftmargin=3em,label=\ref*{as:Flimit}\alph*.,ref=\ref{as:Flimit}\alph*]
    \item  \label{as:Flimit_entropy_convergence} The linear channel relative entropy $H_n$ converges pointwise to a function $H : (\psd^d)^k \to \bbR$ that is continuously differentiable over positive definite cone $(\pd^d)^k$.
    \item \label{as:Flimit_concentration} Let $\cC$ be any compact subset of $(\psd^d)^k \times [0,\infty)$. Then, the density $\rho^n_{\bmx,\bmtheta,\bmz,\mathring\bmz}$ satisfies 
    \begin{align}
        \lim_{n\to \infty} \frac{1}{n}\ex[\bigg]{\sup_{(S,t') \in \cC} \,\abs[\Big]{ \log\rho_{\bmx,\bmtheta,\bmz,\mathring\bmz}^n(S,t') - \ex{\log\rho_{\bmx,\bmtheta,\bmz,\mathring\bmz}^n(S,t')}}^2 } = 0,
    \end{align} 
    where both expectations are taken with respect to the joint distributions of $(\bmx,\bmtheta,\bmz,\mathring\bmz)$.
\end{enumerate}
\end{condition}
We can parametrize the free energy for $\overline\bmy$ by $(\mathring S,t)$ by defining $F_n : (\psd^d)^k \times [0,\infty) \to \bbR$ as 
\begin{align}
    F_n(\mathring S,t) \coloneqq \frac{1}{n}\ex[\big]{\log \rho^n_{\bmx,\bmtheta,\bmz,\mathring\bmz}(\mathring S,t)},
\end{align} 
where again the expectation is taken with respect to the distribution of $(\bmx,\bmtheta,\bmz,\mathring\bmz)$. Notice that the free energy $F_n(t)$ of model~\eqref{eq:y_approx} is then just $F_n(0,t)$, and we identified $F_n(t) \equiv F_n(0,t)$. 

For each $(\mathring S,t) \in (\psd^d)^k \times [0,\infty)$, let us define the potential functions $\Psi : (\psd^d)^k \times (\psd^d)^k \to \bbR$ as
\begin{align} 
    \Psi(Q, S)  &\coloneqq H(S + \mathring S) - \frac{1}{2} \sum_{j=1}^k \inner{S_j}{Q_j} 
    + \frac{t}{2} \sum_{\kappa \in [k]^{q}}  \inner{A_\kappa^\top A_\kappa}{Q_{\kappa_1} \otimes \dots \otimes Q_{\kappa_q}}. \label{eq:potential}
\end{align}

\begin{thm} \label{th:Flimit}
    Assume Condition~\ref{as:Flimit} holds, and define the function $F : (\psd^d)^k \times [0,\infty) \to \bbR$ 
    \begin{align} \label{eq:F_def}
        F(\mathring S, t) \coloneqq \sup_{Q \in (\psd^d)^k}\inf_{S \in (\psd^d)^k} \Psi(Q,S).
    \end{align}
    Then, the free energy $F_n$ of $\overline\bmy$ converges pointwise to $F$  for all $(\mathring S,t) \in (\psd^d)^k \times [0,\infty)$,
    \begin{align}
        \lim_{n \to \infty} F_n(\mathring S,t) = F(\mathring S, t).
    \end{align}
\end{thm}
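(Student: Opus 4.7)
The plan is to reduce the augmented observation $\overline{\bmy} = (\bmy, \mathring{\bmy})$ to a matrix tensor product (MTP) model in the sense of \cite{reeves2020,chen2022} and then exploit orthogonality of the embedding to collapse the high-dimensional variational formula onto the compact cone $(\psd^d)^k$. The starting point is the factorization
\begin{align}
    f_\alpha^{(n)}(\bmx,\bmtheta) = \frac{1}{\sqrt{n^{q-1}}} A_{\psi(x_{\alpha_1})\cdots \psi(x_{\alpha_q})} \big(\phi(\theta_{\alpha_1}) \otimes \cdots \otimes \phi(\theta_{\alpha_q})\big),
\end{align}
which, after partitioning the indices $i \in [n]$ according to $\psi(x_i) \in [k]$, stacks the vectors $\phi(\theta_i)$ into $k$ per-class signal blocks $\Phi_1,\dots,\Phi_k$ with $d$ columns each. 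The observation tensor then decomposes block-wise as an order-$q$ multi-species MTP whose $\kappa$-block is $\sqrt{t/n^{q-1}}\,A_\kappa(\Phi_{\kappa_1}\otimes\cdots\otimes\Phi_{\kappa_q})$ plus independent Gaussian noise, while $\mathring{\bmy}$ contributes an independent per-class linear Gaussian channel with precision profile $\mathring S$. This is exactly the structure to which the MTP free-energy machinery applies.

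\textbf{Main step.} With the lifted model in hand, the MTP limit theorem yields a sup--inf characterization in which the overlap variable $Q_j \in \psd^d$ plays the role of the asymptotic per-class second moment $\tfrac{1}{n_j}\sum_{i:\psi(x_i)=j}\phi(\theta_i)\phi(\theta_i)^\top$ and $S_j \in \psd^d$ is its Legendre dual. The single-letter term $H(S+\mathring S)$ arises because, once decoupled via the primal--dual fields, the model within class $j$ reduces to a linear Gaussian channel with effective precision $S_j + \mathring S_j$, whose normalized relative entropy converges by \ref{as:Flimit_entropy_convergence}. The SNR term is obtained from the Gram-type identity
\begin{align}
    \frac{1}{n^{q-1}}\,\bbE\norm[\big]{f^{(n)}(\bmx,\bmtheta')}^2 \longrightarrow \sum_{\kappa \in [k]^{q}} \inner{A_\kappa^\top A_\kappa}{Q_{\kappa_1}\otimes\cdots\otimes Q_{\kappa_q}},
\end{align}
where the orthogonality of the class decomposition kills all cross-type blocks and replaces the random signal blocks by their population overlaps; this is the step that reduces the generic MTP saddle to the compact form \eqref{eq:potential}. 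Assumption \ref{as:Flimit_concentration} lifts the pointwise convergence of the expected log-partition function to uniform concentration on compact subsets of $(\psd^d)^k \times [0,\infty)$, which is exactly what is required to justify the Guerra-type interpolation and the exchange of limit with the supremum/infimum.

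\textbf{Main obstacle.} The principal difficulty is that, as emphasized in Remark~\ref{remark:psd}, no positive semidefiniteness is imposed on $f$, so the limit is genuinely a sup--inf rather than a single supremum. This forces the use of the general (non-psd) MTP limit of \cite{chen2022} and a careful treatment of the dual field $S$: the auxiliary observation $\mathring{\bmy}$ is precisely the perturbation that permits recovery of the infimum over $S$ by differentiating in $\mathring S$ via \ref{as:Flimit_entropy_convergence}, provided one first argues pointwise convergence on the open interior $(\pd^d)^k$ where $H$ is differentiable. Extension from the interior to the full boundary of $(\psd^d)^k$ then proceeds from continuity and monotonicity of free energy under additional Gaussian observations (convolution with $\gaus$ being non-expansive in $W_2$, as already used in Theorem~\ref{th:approx}). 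The remaining tasks---identifying the MTP saddle with $\Psi(Q,S)$ and passing from empirical to population overlap---are essentially bookkeeping once the lifting and concentration are in place.
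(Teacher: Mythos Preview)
Your high-level strategy matches the paper's: lift to an MTP, apply \cite{chen2022}, then collapse the variational formula onto $(\psd^d)^k$ by orthogonality. But the lifting you sketch is not the one that works. You propose partitioning indices by class into per-class blocks $\Phi_1,\dots,\Phi_k$ and invoking a ``multi-species MTP''; however, the theorem of \cite{chen2022} is stated for a \emph{single} species with parameters in a common $\bbR^D$, and no multi-species extension is available off the shelf. The paper instead lifts each parameter to $\widehat{\eta}_i \coloneqq e_{\psi(x_i)} \otimes \phi(\theta_i) \in \bbR^{dk}$ (one-hot class label tensored with feature), stacks all $A_\kappa$ into a single $\widehat{A} \in \bbR^{m\times (dk)^q}$ via a commutation permutation, and thereby obtains a genuine single-species MTP in dimension $dk$. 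This delivers a sup--inf over the large cone $\psd^{dk}$, not over $(\psd^d)^k$ directly.

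The second gap is the collapse from $\psd^{dk}$ back to $(\psd^d)^k$, which you summarize as ``orthogonality kills cross-type blocks''. That is not enough: one must show the sup--inf of the lifted potential $\widehat{\Psi}$ localizes to the block-diagonal subcone. The paper's argument has two parts: (i) a projection lemma showing that the lifted entropy $\widehat{H}(\widehat{S})$ depends only on the block-diagonal part of $\widehat{S}$ (because each $\widehat{\eta}_i$ lies entirely in the $\psi(x_i)$-th block of $\bbR^{dk}$), and (ii) the consequence that for any $\widehat{Q}$ with nonzero off-block-diagonal part, $\inf_{\widehat{S}}\widehat{\Psi}(\widehat{Q},\widehat{S}) = -\infty$, since the off-diagonal piece of $\widehat{S}$ can be sent to infinity at zero entropic cost. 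Without this step the reduction is unjustified. Relatedly, your ``Main obstacle'' paragraph is aimed at the wrong target: no interior-to-boundary extension in $\mathring{S}$ is needed, as the MTP theorem already gives pointwise convergence on all of $\psd^{dk}\times[0,\infty)$ and the remaining reduction is purely algebraic. The genuine obstacle is precisely the localization argument above.
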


\subsection{Free energy, overlap and minimum mean-squared error} \label{ss:overlap}
A key feature of the free energy $F_n$ is that the arguments $(\mathring S,t)$ play the role of perturbation parameters with respect to which it is possible to take gradients. These derivatives are known as (expected) \textit{overlaps}, and provide insight into the statistical limits of inference for the model of interest.

In particular, let $\nabla_{\!j}$ denote the partial gradient of $F_n$ with respect to $\mathring S_j$, $j=1,\dots,k$, and $\partial_t$ denote the partial derivative of $F_n$ with respect to $t$. We have the following identities: 
\begin{align}
    \nabla_{\!j} F_n &= \frac{1}{2n} \sum_{i=1}^n  \ex*{ \bbone_{\{\psi(x_i) = j \} }  \ex{ \phi(\theta_i) \mid \bmx,\overline{\bmy}} {\ex{ \phi(\theta_i) \mid \bmx,\overline{\bmy}}}^\top}, \quad j = 1, \dots, k \label{eq:grad_linear}\\
    \partial_t F_n &= \frac{1}{2n} \ex*{ \norm[\big]{\ex{f^{(n)}(\bmx,\bmtheta) \mid \bmx,\overline{\bmy}}}^2 } \label{eq:grad_q_wise}
\end{align}
From identity~\eqref{eq:I_F_identity}, these gradients characterize the Bayes risk under the squared error loss for estimating  $\phi(\theta_1), \dots, \phi(\theta_n)$ and $f^{(n)}(\bmx, \bmtheta)$ from the observed data $(\bmx, \overline{\bmy})$. Note that the identities~\eqref{eq:grad_linear}-\eqref{eq:grad_q_wise} hold uniformly over the relative interior of the domain of $F_n$, $(\pd^d)^k \times (0,\infty)$, and the derivatives can be extended to points on the boundary via continuity, so that they are well-defined on the entirety of $(\psd^d)^k \times [0,\infty)$. 

Meanwhile, the limiting free energy $F$ is convex and thus differentiable almost  everywhere on $(\pd^d)^k \times (0,\infty)$. It can also be verified (see e.g.,~\cite{chen2022}) that the gradients $(\nabla_1 F_n, \dots, \nabla_k F_n, \partial_t F_n)$  are monotone non-decreasing with respect to the partial order on $(\psd^d)^k \times [0,\infty)$ defined by $A \preceq B $ if an only if $B-A \in (\psd^d)^k \times [0,\infty)$. Following arguments in~\cite[Lemma 10]{reeves2019}, it then follows that
\begin{align}
    \nabla_j F_n(\mathring S, t)  \to F(\mathring S, t), \qquad \partial_t F_n(\mathring S, t) \to \partial_t F (\mathring S, t) \label{eq:gradFn_to_gradF}
\end{align}
for all $(\mathring S, t) \in (\pd^d)^k \times (0,\infty)$ at which $F$ is differentiable. In this way, the limiting free energy characterizes the limiting overlaps almost everywhere with respect to the side information $\mathring S$ and signal-to-noise ratio $t$. 

The overlaps for the original model  $(\bmx, \bmy)$ without any side information corresponds to the boundary case $\mathring S = 0$. In some cases, one can leverage structural properties of the original model to verify that the convergence in~\eqref{eq:gradFn_to_gradF} also holds on the boundary. 
However, it may also be the case that limiting overlap is discontinuous on the boundary. In such cases, the limiting Bayes risk can be bounded below in terms of the subdifferential of $F$ evaluated at the boundary point; see~\cite{reeves2020} and~\cite{chen2025a} for more details.

In our applications in the coming sections, we will derive expressions for the free energy assuming there is no auxiliary observation (i.e. $\mathring S=0$), and all results are understood to hold pointwise for all fixed $t \in [0,\infty)$. Our results can be extended directly to the case of generic $\mathring S$ by noticing that our proof techniques hold pointwise for all $\mathring S$. See Appendices~\ref{s:proof_hlr} and~\ref{s:proof_qap} for more details.  

\section{Heteroskedastic spiked tensor model} \label{s:hetero}
The heteroskedastic spiked tensor model is a variation of the low-rank spiked tensor model in which a collection of vectors $\bmtheta \coloneqq (\theta_i)_{i \in [n]}, \theta_i \in \R^d$ gives rise to the Gaussian noise-corrupted observations $\bmy \coloneqq (y_\alpha)_{\alpha \in [n]^q}$,
\begin{align} \label{eq:y_hetero}
    y_{\alpha} = \sqrt{\frac{t}{n^{q-1}}} \inner{b}{\, \theta_{\alpha_1} \otimes \dots \otimes \theta_{\alpha_q}} + \sigma_{\alpha} z_{\alpha}, \quad \alpha \in [n]^q,
\end{align}
where $b \in \bbR^{d^q}$ is a known vector defining a multilinear map. The scalars $\sigma_{\alpha} > 0$ are known parameters that  define the noise variance profile. We also allow for $\sigma_{\alpha}=+\infty$, to represent unobserved entries. To simplify the presentation, we assume that the unknown parameters $\theta_i$ are i.i.d.\ from a compactly supported distribution $\bmeas$ on $\bbR^d$. 

For the matrix case,~\eqref{eq:y_hetero} readily reduces the standard low-rank spiked matrix model by letting $b = \vec \id_d$, so that $(\theta,\theta') \mapsto \inner{b}{\theta \otimes \theta'} = \inner{\theta}{\theta'}$ defines the inner product map. The fundamental limits of the resulting model
\begin{align}
    y_{ij} = \sqrt{\frac{t}{n}} \inner{\theta_i}{\theta_j} + \sigma_{ij} z_{ij}, \quad i,j \in [n]
\end{align}
were previously derived in~\cite{behne2022} assuming that the number of unique entries in $(\sigma_{ij})$ stays bounded as $n \to \infty$, and in~\cite{guionnet:2025low-rank} with the constraint that the matrix $(\sigma_{ij})_{i,j\in [n]}$ be positive definite. 

Our results generalize to the spiked tensor case under arbitrary couplings (i.e. choice of $b$) and a large class of variance profiles.
The key observation is that one can represent \eqref{eq:y_hetero} in terms of the model~\eqref{eq:y_alpha}, by using covariates $\bmx$ to encode the index locations, that is $\bmx \coloneqq (1/n,2/n,\dots,1)$ for all $n \in \bbN$. Our approach allows to characterize the mutual information for this model provided that the covariance profile satisfies the following regularity condition.

For $n \in \bbN$ 
define the piecewise-constant functions as $\Omega_n \colon [0,1]^q \to [0, \infty)$ with $\Omega_n(0) = 0$ and  
\begin{align}
\Omega_n(x_1,\dots,x_q)\coloneqq \sigma^{-1}_{\lceil n x_1 \rceil \cdots \lceil n x_q \rceil}
\end{align}
for all $x \in (0,1]^q$ where  $\lceil x \rceil \coloneqq \inf \set{y \in \bbN : x \leq y}$. Here we use the convention that $\sigma_\alpha = +\infty$ is mapped to zero, and refer to functions $\Omega_n$ as the inverse noise variance function. 

\begin{condition}\label{as:hlr} 
The inverse noise variance $\Omega_n$ is bounded uniformly for all $n$.  Furthermore, there exists a bounded measurable function $\Omega : [0,1]^q \to [0,\infty)$ such 
\begin{align}
    \lim_{n \to \infty} \int_{[0,1]^q} \abs[\big]{\Omega(x) - \Omega_n(x)} \dd x = 0.
\end{align}
\end{condition}
In words, this condition ensures that the variance profiles converge to a limit that is continuous, except on a set of arbitrarily small measure. 

To make the reduction to the $q$-way interaction model explicit, we note that the observations~\eqref{eq:y_hetero} can be put into one-to-one correspondence, and are therefore statistically equivalent, to observations
\begin{align}
    y_\alpha = \sqrt{\frac{t}{n^{q-1}}} \; \Omega_n(x_{\alpha_1},\dots,x_{\alpha_q}) \inner[\big]{b}{\, \theta_{\alpha_1} \otimes \dots \otimes \theta_{\alpha_q}} + z_\alpha, \quad \alpha \in [n]^q.
\end{align}

Our main result for this section states that, under Condition~\ref{as:hlr}, the asymptotic free energy of the heteroskedastic spiked tensor model can be expressed as the solution to a variational problem whose structure is analogous to~\eqref{eq:F_def}. A proof is presented in Appendix~\ref{s:proof_hlr}.

For a given triple $(P, \Omega, b)$, we define the maps  $D,J : \cS([0,1]) \to L^2([0,1])$, 
\begin{align}
    [H(\sfS)](x) &= \div[\big]{[\sfS(x)]_\#\bmeas * \gaus}{\gaus}, \\
    [J(\sfS)](x) &= \int_{[0,1]^{q-1}} \big[\Omega(x,x_2,\dots,x_q)\big]^2 \inner{bb^\top}{\sfS(x_2) \otimes \dots \otimes \sfS(x_q)} \dd (x_2,\dots,x_q),
\end{align}
where  $\cS([0,1])$ is the space of bounded measurable functions $\sfS : [0,1] \to \psd^d$, and the potential function  $\Psi : \cS([0,1]) \times \cS([0,1]) \to \bbR$,
\begin{align}
    \Psi(\sfQ,\sfS) \coloneqq \int_{[0,1]} [H(\sfS)](x) - \frac{1}{2} \inner{\sfS(x)}{\sfQ(x)} + \frac{t}{2} [J(\sfQ)](x) \dd x.
\end{align}
\begin{thm} \label{th:hlr} 
Assume Condition~\ref{as:hlr} holds. Then, for all $t \in [0,\infty)$, the free energy $F_n(t)$ for the heteroskedastic spiked tensor model~\eqref{eq:y_hetero} satisfies
    \begin{align}
        \lim_{n \to \infty} F_n(t) = \adjustlimits \sup_{\sfQ \in \cS([0,1])} \inf_{\sfS \in \cS([0,1])} \Psi(\sfQ,\sfS).
    \end{align}
\end{thm}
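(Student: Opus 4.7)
The plan is to cast the heteroskedastic spiked tensor model into the $q$-wise interaction framework of Section~\ref{s:factor_model} and then apply Theorems~\ref{th:approx} and~\ref{th:Flimit} in tandem. With $x_i = i/n$ and signal function $f(x_1,\dots,x_q,\theta_1,\dots,\theta_q) = \Omega_n(x_1,\dots,x_q)\inner{b}{\theta_1 \otimes \cdots \otimes \theta_q}$, the observations~\eqref{eq:y_hetero} are an instance of~\eqref{eq:y_alpha}. I would then (i) approximate $\Omega$ by a sequence $\hat\Omega^{(k)}$ of tensor-product step functions on product partitions of $[0,1]^q$, producing $\hat f^{(k)} \in \cF$; (ii) use Theorem~\ref{th:approx} to control the free-energy error and Theorem~\ref{th:Flimit} to obtain a finite-dimensional max-min formula for each $k$; (iii) identify this finite-dimensional formula with $\sup_\sfQ \inf_\sfS \Psi_{\hat\Omega^{(k)}}$ restricted to piecewise-constant test functions, and pass to the limit $k \to \infty$.

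\textbf{Discretization and finite-dimensional formula.} Partition $[0,1]$ into $k$ cells $C_1^{(k)},\dots,C_k^{(k)}$ (dyadic will do), let $\psi^{(k)}(x)$ be the cell index of $x$, and let $\hat\Omega^{(k)}$ be the $L^2$-projection (cell average) of $\Omega$ onto tensor-product step functions on the product partition. Since Condition~\ref{as:hlr} gives $\Omega \in L^\infty([0,1]^q)$ and these tensor-product step functions are dense in $L^2$, we have $\|\Omega - \hat\Omega^{(k)}\|_{L^2} \to 0$. Combined with compact support of $\bmeas$, which bounds $\inner{b}{\theta_1 \otimes \cdots \otimes \theta_q}$, and the $L^1$ convergence $\Omega_n \to \Omega$, a Riemann-sum calculation verifies Condition~\ref{cond:eps_approx} via the stronger bound~\eqref{eq:W2eps_simple}. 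Each $\hat f^{(k)}$ is multilinear with $\phi$ the identity on $\bbR^d$ and $A_\kappa = \omega^{(k)}_\kappa b^\top$. Condition~\ref{as:Flimit} is verified by observing that the linear-channel entropy takes the single-letter form $H^{(k)}(\mathring S) = \sum_j |C_j^{(k)}| \div{\mathring S_j^{1/2} \bmeas * \gaus}{\gaus}$, continuously differentiable on $(\pd^d)^k$ by compactness of $\operatorname{supp}\bmeas$, and concentration of $\log\rho^n$ follows from Gaussian--Poincar\'e in $(\bmz,\mathring\bmz)$ plus bounded differences in $\bmtheta$. Theorem~\ref{th:Flimit} then gives a sup-inf $F^{(k)}$ over $(\psd^d)^k \times (\psd^d)^k$; the rescaling $Q_j \mapsto Q_j/|C_j^{(k)}|$ identifies it with $\sup\inf \Psi_{\hat\Omega^{(k)}}$ over piecewise-constant $\sfQ,\sfS$.

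\textbf{Main obstacle: the limit $k \to \infty$.} By Theorem~\ref{th:approx}, $\lim_n F_n(t) = \lim_k F^{(k)}$, so it remains to show
\[
    \lim_{k\to\infty} \sup_{\sfQ \in \cS_k} \inf_{\sfS \in \cS_k} \Psi_{\hat\Omega^{(k)}}(\sfQ,\sfS) = \sup_{\sfQ \in \cS([0,1])} \inf_{\sfS \in \cS([0,1])} \Psi_\Omega(\sfQ,\sfS),
\]
where $\cS_k$ is the space of piecewise-constant functions on the $k$th partition. I expect this to be the bulk of the work, and I would approach each inequality separately. For the upper bound, every $\sfQ \in \cS_k$ also lies in $\cS([0,1])$, so after controlling the $L^2$-discrepancy between $\hat\Omega^{(k)}$ and $\Omega$ in the polynomial term $[J(\sfQ)](x)$ uniformly over bounded $\sfQ$, widening the inner infimum to all of $\cS([0,1])$ can only lower the value. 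For the lower bound, I would take a near-optimal $\sfQ^\star$, approximate it by $\sfQ^{(k)} \in \cS_k$, and argue that near-minimizers of $\sfS \mapsto \Psi_\Omega(\sfQ^{(k)},\sfS)$ may be taken in $\cS_k$ as well. The essential ingredients are $L^2$-continuity of $\Psi_\Omega$ in both arguments, a priori bounds confining near-optimizers to a bounded ball of $\cS([0,1])$ (arising from convexity and superlinear growth of the entropy in $\sfS$ together with the polynomial structure in $\sfQ$), and the density of piecewise-constant functions. Making the $\hat\Omega^{(k)} \to \Omega$ perturbation uniform over these bounded sets is the crux.
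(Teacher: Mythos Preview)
Your overall plan---discretize $\Omega$, apply Theorems~\ref{th:approx} and~\ref{th:Flimit}, then pass $k\to\infty$---matches the paper's, but the paper handles your ``main obstacle'' far more cleanly than the continuity-and-compactness program you sketch. The trick you are missing is twofold. First, project $\Omega^2$ rather than $\Omega$ onto tensor-product step functions (take $A_\kappa = \sqrt{[\cP_{\cI_k}\Omega^2](\kappa)}\, b^\top$): since the $J$-term in the potential is \emph{linear} in $\Omega^2$ and the stepping operator $\cP_{\cI_k}$ is an $L^2$-orthogonal projection, one has the exact identity
\[
\int_{[0,1]^q} [\cP_{\cI_k}\Omega^2]\, \inner{bb^\top}{\sfQ(x_1)\otimes\cdots\otimes\sfQ(x_q)} \,\dd x = \int_{[0,1]^q} \Omega^2\, \inner{bb^\top}{\sfQ(x_1)\otimes\cdots\otimes\sfQ(x_q)} \,\dd x
\]
whenever $\sfQ \in \cS(\cI_k)$, because the second factor is then already $\cI_k^q$-piecewise constant and hence fixed by $\cP_{\cI_k}$. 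Second, when $\sfQ \in \cS(\cI_k)$ the inner infimum decouples cell-by-cell and equals $-\int H^*(\sfQ/2)$, so restricting $\sfS$ to $\cS(\cI_k)$ costs nothing. Consequently each $F^{(k)}$ equals $\sup_{\sfQ \in \cS(\cI_k)} \inf_{\sfS \in \cS([0,1])} \Psi(\sfQ,\sfS)$ with the \emph{true} $\Omega$; choosing a refining sequence of partitions makes $(F^{(k)})_k$ monotone increasing, and the limit is the unconstrained sup-inf by density of $\bigcup_k \cS(\cI_k)$. This bypasses all perturbation estimates between $\Psi_{\hat\Omega^{(k)}}$ and $\Psi_\Omega$.

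Your route should also work, but one stated justification is off: ``superlinear growth of the entropy in $\sfS$'' is not available for compactly supported $\bmeas$, since $H(S) \le \frac{1}{2}\inner{S}{\ex{\theta\theta^\top}}$ grows at most linearly. The correct confinement mechanism is rather that the inner infimum is $-\infty$ unless $\sfQ(x) \preceq \ex{\theta\theta^\top}$ almost everywhere, which bounds the outer supremum.
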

\begin{remark}
Theorem~\ref{th:hlr} applied to the $q=2$ case allows to recover the results in \cite[Theorem 1]{behne2022} and \cite[Theorem 2.10]{guionnet:2025low-rank} as special cases. For the former, it is sufficient to notice that when the number of unique variance types is finite then Condition~\ref{as:hlr} is immediately satisfied. For the latter, the function $\Omega$ can be understood as the positive semidefinite kernel corresponding to the uniform limit of the positive definite matrices $(\sigma_{ij}^{-1})_{i,j \in [n]}$.
\end{remark}

\section{\texorpdfstring{$q$-adic}{q-adic} assignment problem} \label{s:qap}
The $q$-adic assignment problem is a classical matching problem that was proposed by Lawler~\cite{lawler:1963the-quadratic} as a higher-order generalization of the quadratic assignment problem of Koopmans and Beckmann~\cite{koopmans1957}. Given two real-valued arrays $(\bmv,\bmy) \coloneqq (v_\alpha,y_\alpha)_{\alpha \in [n]^q}$, we are tasked to find a permutation $\sigma \in \sym([n])$ that maximizes the alignment between $\bmv$ and $\bmy$, as measured by their inner product:
\begin{align}
    \underset{\sigma \in \sym([n])}{\text{maximize}} \quad \sum_{\alpha \in [n]^q} \inner{v_{\sigma(\alpha)}}{y_\alpha},
\end{align}
where the action of $\sigma$ on index elements $\alpha \in [n]^q$ is given by  $\sigma(\alpha) \coloneqq (\sigma(\alpha_1),\dots,\sigma(\alpha_q))$. 

Here, we consider a probabilistic formulation of the $q$-adic assignment problem, generalizing the problem statement of~\cite{yang2024} beyond the quadratic case. We assume $\bmy$ is a noisy observation of $\bmv$, whose indices have been permuted by some unknown random permutation $\sigma \in \sym([n])$, that is 
\begin{align}
    y_\alpha = v_{\sigma(\alpha)} + z_\alpha, \quad \alpha \in [n]^q,
\end{align}
for $z_\alpha \sim_\iid \normal(0,1) \Perp \sigma \sim \unif(\sym([n]))$. The task is to recover $\sigma$ conditionally on knowledge of $(\bmv,\bmy)$. Additionally, we assume that the array $\bmv$ has the following separable structure: there exist points $\bmu \coloneqq (u_i)_{i \in [n]}$, $u_i \in [0,1]$, and a continuous function $f : [0,1]^q \to \bbR$ such that
\begin{align}
    v_{\alpha} = \sqrt{\frac{t}{n^{q-1}}}f(u_{\alpha_1},\dots,u_{\alpha_q}), \quad \alpha \in [n]^q.
\end{align}
We identify the signal $\bmtheta \coloneqq (\theta_1,\dots,\theta_n)$ with the randomly permuted atoms $\bmu$, that is $\bmtheta \coloneqq (u_{\sigma(1)},\dots,u_{\sigma(n)})$. We obtain the model
\begin{align} \label{eq:qap}
    y_\alpha = \sqrt{\frac{t}{n^{q-1}}} f(\theta_{\alpha_1},\dots,\theta_{\alpha_q}) + z_\alpha, \quad \alpha \in [n]^q.
\end{align}
The mutual information $I(\sigma; \bmv, \bmy) = H(\sigma) - H(\sigma \mid \bmv, \bmy)$ provides a natural measure how much can be learned about the unknown permutation $\sigma$ from the observed pair. Using our results, we  characterize the leading order terms in the mutual information by providing exact formula for  $\lim_{n \to \infty} \frac{1}{n} I(\sigma; \bmv, \bmy)$.

The basic idea is to interpret the $q$-adic assignment problem~\eqref{eq:qap} above as an instance of the general $q$-wise interaction model~\eqref{eq:y_alpha}, where the unknown signal $\bmtheta$ is the randomly permuted vector $\bmu$.  In the case that $\bmu$ is known and has distinct entries, there is a bijection between $\sigma$ and $\bmtheta$ and so the problem of recovering $\sigma$ is equivalent to the problem of recovering $\bmtheta$. Interestingly, it turns out that this equivalence holds generally without requiring any identifiably  conditions on $\bmu$. This is formalized in the statement below.

\begin{prop} \label{prop:qap_info_equivalence}
    Consider the $q$-adic assignment problem~\eqref{eq:qap}. Then, $I(\sigma; \bmv,\bmy) = I(\bmtheta; \bmy \mid \bmu)$.
\end{prop}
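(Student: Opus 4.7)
The plan is to establish the identity via a short chain of elementary manipulations of conditional mutual information, keyed on a handful of conditional independence relations built into the model. Concretely, I would prove the three equalities
\begin{align}
I(\sigma; \bmv, \bmy) = I(\sigma; \bmy \mid \bmv) = I(\sigma; \bmy \mid \bmu) = I(\bmtheta; \bmy \mid \bmu),
\end{align}
each following from essentially one observation, and then chain them together.

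The first and third equalities are one-line consequences of the chain rule. For the first, since $\bmv$ is a deterministic function of $\bmu$ and $\sigma$ is independent of $\bmu$, we have $\sigma \Perp \bmv$, so $I(\sigma; \bmv) = 0$ and $I(\sigma; \bmv, \bmy) = I(\sigma; \bmy \mid \bmv)$. For the third, $\bmtheta = u_\sigma$ is a deterministic function of $(\sigma, \bmu)$ while $\bmy = g(\bmtheta) + \bmz$ with $\bmz$ independent of everything else; hence $I(\bmtheta; \bmy \mid \sigma, \bmu) = 0$ and $I(\sigma; \bmy \mid \bmtheta, \bmu) = 0$, and expanding $I(\sigma, \bmtheta; \bmy \mid \bmu)$ via the chain rule in the two natural ways gives the equality.

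The middle equality is the most delicate step and the place where I expect the main (albeit ultimately routine) obstacle. The key observation is that $\bmy$ admits the representation $y_\alpha = v_{\sigma(\alpha)} + z_\alpha$, so it is a measurable function of $(\bmv, \sigma, \bmz)$ with no residual dependence on $\bmu$. Combined with $\sigma \Perp \bmu$ and the independence of $\bmz$ from everything else, this yields both $\bmy \Perp \bmu \mid (\bmv, \sigma)$ and $\bmy \Perp \bmu \mid \bmv$. Expanding $I(\sigma, \bmu; \bmy \mid \bmv)$ via the chain rule, first as $I(\sigma; \bmy \mid \bmv) + I(\bmu; \bmy \mid \sigma, \bmv)$ and then as $I(\bmu; \bmy \mid \bmv) + I(\sigma; \bmy \mid \bmu, \bmv)$, both ``$\bmu$-terms'' vanish by these two independences, while $I(\sigma; \bmy \mid \bmu, \bmv) = I(\sigma; \bmy \mid \bmu)$ since $\bmv$ is $\bmu$-measurable. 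The only subtle point is verifying that the independence $\bmy \Perp \bmu \mid \bmv$ survives the averaging over $\sigma$, which is immediate from the product structure of the joint law of $(\bmu, \sigma, \bmz)$.
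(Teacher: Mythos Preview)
Your proof is correct and follows essentially the same strategy as the paper: repeated applications of the chain rule for mutual information combined with the conditional independences built into the model. The paper routes through $I(\sigma; \bmu, \bmv, \bmy)$ (invoking $I(\sigma; \bmu \mid \bmv, \bmy) = 0$ to drop $\bmu$) rather than your intermediate $I(\sigma; \bmy \mid \bmv)$, but the underlying independences are identical and your ordering makes them, if anything, slightly more explicit.
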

With this, we can move to characterizing the asymptotic free energy for $q$-adic assignment problem. Our result applies assuming the following condition on $f:[0,1]^q \to \bbR$ and $\bmu$ holds.

\begin{condition}\label{as:qap}
The empirical measure $\frac{1}{n}\sum_{i=1}^n \delta_{u(i)}$ converges weakly to a distribution $\bmeas$ supported on the unit interval. Furthermore, there exist continuous functions $(\phi_i)_{i \in \bbN}$, $\phi_i : [0,1] \to \bbR$, and coefficients $(b_\alpha)_{\alpha \in \bbN^q}$ such that, for every $\eps > 0$, there exists an integer $d$ such that
    \begin{align}
\sup_{(\theta_1,\dots,\theta_q)\in [0,1]^q} \abs[\Big]{ f(\theta_1,\dots,\theta_q) - \sum_{\alpha \in [d]^q} b_\alpha  \left(\phi_{\theta_1}(\alpha_1) \cdots \phi_{\alpha_q}(\theta_q) \right) }\leq\eps.  \label{eq:f_approx} 
\end{align}
\end{condition}

The key idea for characterizing the statistical limits of~\eqref{eq:qap} is that under Condition~\ref{as:qap} we can construct and $\eps$-dependent sequence of $(d,k)$-multilinear approximations $\hat f^{(n)} \in \cF_{d,k}$ to $\bmw$ with uniform control over the approximation error. For this model covariates $\bmx$ will not be necessary, and we can ignore the role of $k$. All details and a proof of Theorem~\ref{th:qap} are contained in Appendix~\ref{s:proof_qap}.

For each $\eps >0$, we set $d$ to be the smallest positive integer such that \eqref{eq:f_approx} holds with error $\eps$. We set $B^{(\eps)} \coloneqq (b_\alpha)_{\alpha \in [d]^q}$ and define the vector-valued map $\phi^{(\eps)} : [0,1] \to \bbR^d$ as $\phi^{(\eps)} : \theta \mapsto \big(\phi_1(\theta), \dots, \phi_d(\theta)\big)$. For each $\eps>0$, we construct the $d$-linear approximation model
\begin{align} \label{eq:y_eps}
    y_{\alpha}^{(\eps)} = \sqrt{\frac{t}{n^{q-1}}} \inner[\big]{ \vec B^{(\eps)}}{\phi(\theta_{\alpha_1}) \otimes \dots \otimes \phi(\theta_{\alpha_q})} + z_{\alpha}, \quad \alpha \in [n]^q.
\end{align}
For each $\eps>0$ and $S \in \psd^d$, we define the linear channel relative entropy $H^{(\eps)} :\psd^d \to \bbR$ as $H^{(\eps)}(S) = \div[\big]{ (S^{1/2}\phi^{(\eps)})_\# \bmeas * \gaus }{\gaus}$, as well as the potential function $\Psi^{(\eps)} : \psd^d \times \psd^d \to \bbR$,
\begin{align}
    \Psi^{(\eps)}(Q,S) \coloneqq H^{(\eps)}(S)  - \frac{1}{2} \inner{Q}{S} + \frac{t}{2}\inner[\big]{\vec(B^{(\eps)})\vec(B^{(\eps)})^\top}{Q^{\otimes q}}.
\end{align}
For each $\eps > 0$ and let $F^{(\eps)} : [0,\infty) \to \bbR$ be defined pointwise as the solution to the variational problem
\begin{align}
    F^{(\eps)}(t) \coloneqq \adjustlimits \sup_{Q \in \psd^d}\inf_{S \in \psd^d} \Psi^{(\eps)}(Q,S).
\end{align}
\begin{thm}\label{th:qap}
Assume that Condition~\ref{as:qap} holds. Further, assume that for all $\eps > 0 $ model \eqref{eq:y_eps} satisfies Condition~\ref{as:Flimit}. Then, for all $t \in [0,\infty)$, the free energy $F_n$ for \eqref{eq:qap} satisfies 
    \begin{align}
        \lim_{n \to \infty}F_n(t) = 
         \lim_{\eps \to 0} F^{(\eps)}(t),
    \end{align}
 where both limits exist and are finite. 
\end{thm}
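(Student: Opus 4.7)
The plan is to treat $F_n$ as a double limit: first pass $n\to\infty$ to collapse each $\epsilon$-approximation to its variational limit via Theorem~\ref{th:Flimit}, then pass $\epsilon\to 0$ through a Cauchy argument, with Theorem~\ref{th:approx} providing the uniform-in-$n$ control that links the two limits.

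\paragraph{Step 1: the multilinear model lies in $\cF$ and matches the $\epsilon$-approximation.}
First I would check that $\hat f^{(\epsilon)}(\theta_1,\dots,\theta_q)=\inner{\vec B^{(\epsilon)}}{\phi^{(\epsilon)}(\theta_1)\otimes\cdots\otimes \phi^{(\epsilon)}(\theta_q)}$ belongs to $\cF_{d,1}$ with trivial covariate map $\psi\equiv 1$; the components $\phi_i$ are continuous on the compact interval $[0,1]$ and hence bounded, so the regularity demanded in the definition of $\cF$ is met. Using the uniform bound \eqref{eq:f_approx} together with the definition of the dimension-scaled function in \eqref{eq:g} gives
\begin{align}
    \frac{1}{n}\ex[\big]{\norm[\big]{f^{(n)}(\bmtheta)-\hat f^{(n,\epsilon)}(\bmtheta)}^2}
    \;=\;\frac{1}{n\,n^{q-1}}\sum_{\alpha\in[n]^q}\ex[\big]{\abs[\big]{f(\bmtheta(\alpha))-\hat f^{(\epsilon)}(\bmtheta(\alpha))}^2}
    \;\le\;\epsilon^2.
\end{align}
By the remark following Condition~\ref{cond:eps_approx}, the simpler $L^2$ bound implies Condition~\ref{cond:eps_approx} at level $\epsilon^2$, and so by Theorem~\ref{th:approx} the free energy $\hat F_n^{(\epsilon)}$ of the approximating model \eqref{eq:y_eps} satisfies $\limsup_{n\to\infty}\abs{F_n(t)-\hat F_n^{(\epsilon)}(t)}\le g(\epsilon)$ for some function $g(\epsilon)\to 0$ as $\epsilon\to 0$.

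\paragraph{Step 2: exact limit of each approximation.}
Since Condition~\ref{as:Flimit} is assumed to hold for \eqref{eq:y_eps} at each $\epsilon>0$, Theorem~\ref{th:Flimit} applies and gives $\lim_{n\to\infty}\hat F_n^{(\epsilon)}(t)=\sup_{Q}\inf_S \Psi(Q,S)$, where the potential \eqref{eq:potential} specializes (with $k=1$, $\mathring S=0$, and $A_\kappa$ identified with $\vec B^{(\epsilon)}$) exactly to $\Psi^{(\epsilon)}(Q,S)$ as defined in the statement. Hence $\lim_{n\to\infty}\hat F_n^{(\epsilon)}(t)=F^{(\epsilon)}(t)$.

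\paragraph{Step 3: Cauchy in $\epsilon$ and conclusion.}
Combining Steps~1 and~2 yields $\limsup_{n\to\infty}\abs{F_n(t)-F^{(\epsilon)}(t)}\le g(\epsilon)$ for every $\epsilon>0$. The triangle inequality then gives, for $\epsilon_1,\epsilon_2>0$,
\begin{align}
    \abs[\big]{F^{(\epsilon_1)}(t)-F^{(\epsilon_2)}(t)}
    \;\le\; \limsup_{n\to\infty}\abs[\big]{F^{(\epsilon_1)}(t)-F_n(t)} + \limsup_{n\to\infty}\abs[\big]{F_n(t)-F^{(\epsilon_2)}(t)}
    \;\le\; g(\epsilon_1)+g(\epsilon_2),
\end{align}
so $\{F^{(\epsilon)}(t)\}_{\epsilon>0}$ is Cauchy and admits a finite limit $F^\star(t):=\lim_{\epsilon\to 0}F^{(\epsilon)}(t)$. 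A standard $\delta/2$ argument---pick $\epsilon$ small enough that both $g(\epsilon)<\delta/2$ and $\abs{F^{(\epsilon)}(t)-F^\star(t)}<\delta/2$---then gives $\limsup_n\abs{F_n(t)-F^\star(t)}<\delta$ for arbitrary $\delta>0$, hence $\lim_n F_n(t)=F^\star(t)$, which is the claim.

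\paragraph{Main obstacle.}
The individual pieces are each direct appeals to the earlier machinery; the conceptual heart is ensuring that the error $g(\epsilon)$ furnished by Theorem~\ref{th:approx} can be taken uniformly in $n$, which is what allows the order of limits to be exchanged and the Cauchy argument to close. The more delicate point to verify carefully is that the Wasserstein-based continuity of Theorem~\ref{th:approx} indeed yields a quantitative bound depending only on $\epsilon$ (and on fixed quantities such as $t$ and $\norm{f}_\infty$), not on $n$; this is precisely the content of the remark following Condition~\ref{cond:eps_approx} together with the non-expansive behavior of convolution with $\gaus$ under $W_2$.
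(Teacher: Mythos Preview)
Your proposal is correct in outline and follows essentially the same route as the paper: approximate via Theorem~\ref{th:approx}, pass to the variational limit via Theorem~\ref{th:Flimit}, then run a Cauchy argument in $\eps$.

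There is, however, one point in your Step~2 that you pass over too quickly and that the paper treats explicitly. When you write that the potential from Theorem~\ref{th:Flimit} ``specializes exactly to $\Psi^{(\eps)}$'', you are implicitly asserting that the limiting linear-channel entropy $H$ guaranteed by Condition~\ref{as:Flimit_entropy_convergence} coincides with the single-letter formula $H^{(\eps)}(S)=\div{(S^{1/2}\phi^{(\eps)})_\#\bmeas*\gaus}{\gaus}$. This is not automatic: the prior $\bmeas^n$ on $\bmtheta$ in the $q$-adic assignment problem is the law of a uniformly random permutation of the deterministic atoms $\bmu$, not the i.i.d.\ product $\bmeas^{\otimes n}$, so the sequence $H_n$ is not a priori a single-letter quantity and Condition~\ref{as:Flimit} only promises convergence to \emph{some} $H$. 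The paper closes this gap with Lemma~\ref{lem:entropy_permutation_limit}, which couples the permutation prior to the i.i.d.\ prior via an optimal-transport argument and uses $W_2(\pi_n,\bmeas)\to 0$ (a consequence of the weak-convergence part of Condition~\ref{as:qap} on the compact domain $[0,1]$) to conclude $H_n(S)\to H^{(\eps)}(S)$ pointwise. Without this identification, Theorem~\ref{th:Flimit} would only hand you $\lim_n \hat F_n^{(\eps)}$ as a $\sup\inf$ of a potential involving an unspecified limit $H$, and you could not match it to the explicit $F^{(\eps)}$ in the statement.
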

\begin{remark} 
Theorem~\ref{th:qap} generalizes \cite[Theorem 4.4]{yang2024}, which gives a formula for the asymptotic free energy under the additional assumption that $q=2$ and $f$ is a positive semidefinite kernel. Note that if $f$ is positive semidefinite, then the functional approximation in \eqref{eq:f_approx} follows directly from Mercer's theorem. 
Remaining in the $q=2$ setting, Condition~\ref{as:qap} corresponds to a generalization of Mercer's theorem for indefinite and asymmetric kernels, which exists under mild smoothness assumptions; see~\cite{jeong:2024extending} for more details.
\end{remark}

\bibliographystyle{plain}
\bibliography{library,galenbib}

\clearpage

\appendix
\section{Proof of main results} \label{s:proof_main}
\subsection{Proof of Theorem~\ref{th:approx}} \label{ss:proof_approx}
Before presenting the proof, we recall the HWI inequality~\cite{otto:2000generalization}, that will play a key role in the arguments. We will state it following  the presentation of~\cite{gentil:2020an-entropic}.
Let $\mathfrak{m}$ be a probability measure $\mathfrak{m}(\dd x) = \dd x \, e^{-V(x)}$, for $\dd x$ denoting the standard Lebesgue measure on $\bbR^n$ and $V \in C^{\infty}(\bbR^n)$ such that $\operatorname{Hess}(V) \geq \kappa \id_n$. Let $\mu$ be a probability measure that is absolutely continuous with respect to $\mathfrak{m}$, and denote the associated Radon-Nikodym derivative by $\rho$. We define the relative Fisher information divergence $\cI(\mu \; \Vert \; \mathfrak{m})$ as
\begin{align} \label{eq:as_fisher}
    \cI(\mu \: \Vert \: \mathfrak{m}) \coloneqq \ex{\norm{\nabla \log \rho}^2} = \int \frac{\norm{\nabla \rho}^2}{\rho} \dd \mathfrak{m}.
\end{align}
The HWI inequality states that, for two probability measures $\mu,\nu$ that are absolutely continuous with respect to reference measure $\mathfrak{m}$, one has
\begin{align} \label{eq:hwi}
    \div{\mu}{\mathfrak{m}} - \div{\nu}{\mathfrak{m}} \leq W_2(\mu, \nu) \sqrt{\cI(\mu \: \Vert \: \mathfrak{m})} - \frac{\kappa}{2} W_2^2(\mu, \nu).
\end{align}
With this technical tool in hand, we are ready to present our proof.

\paragraph{Proof of Theorem~\ref{th:approx}.}
For any $n \in \bbN$, we will consider as reference the Gaussian measure $\gaus$ defined on the observation space $(\bbR^m)^{N}$. With this choice, it is easily seen that $\kappa = 1$.

We will consider a function $f : \cX^q \times \Theta^q \to \bbR^m$ and assume it satisfies Condition~\ref{cond:eps_approx}. Let $\hat f \in \cF$ be the associated $(d,k)$-multilinear approximation, for some $d,k \in \bbN$. For any $\delta > 0$, then there exists some $\bar n \in \bbN$ such that for all $n \geq \bar n$
\begin{align}
    \ex*{W_2^2\Big( \big[\sqrt{t}f^{(n)}_\# \bmeas^n * \gaus\big](\blank \mid \bmx), \,  \big[\sqrt{t} \hat f^{(n)}_\# \bmeas^n * \gaus\big](\blank \mid \bmx)} \leq n\delta .
\end{align}

Notice that, seen as a function of $\bmy \coloneqq \sqrt{t} f^{(n)}(\bmx,\bmtheta) + \bmz$, the integral of the Hamiltonian
\begin{align}
    \rho^n(\bmy \mid \bmx) \coloneqq \int \exp\{\cH_{\bmx,\bmtheta,\bmz}^n(\bmtheta' ; t)\} \dd \bmeas^n(\bmtheta ' \mid \bmx)
\end{align}
is the density with respect to the Gaussian measure of $\bmtheta \mid (\bmx,\bmy)$, so that
\begin{align}
    \bmeas^n(\dd \bmtheta \mid \bmx,\bmy) = \bmeas^n(\dd \bmtheta \mid \bmx) e^{-\frac{t}{2}\norm{f^{(n)}(\bmx,\bmtheta)}^2} \exp\{\inner{\sqrt{t}f^{(n)}(\bmx,\bmtheta)}{\bmy} - \log \rho^n(\bmy \mid \bmx)\}.
\end{align}
From an application of Tweedie's formula~\cite{robbins:1992an-empirical},  then, the gradient of $\log \rho^n$ yields the conditional expectation 
\begin{align}
    \nabla_{\!\bmy} \log \rho^n(\bmy \mid \bmx) = \ex{\sqrt{t}f^{(n)}(\bmx,\bmtheta) \mid \bmx,\bmy}.
\end{align}
It follows that the expected Fisher information divergence $\ex[\big]{\cI\big( [\sqrt{t}f^{(n)}_\#\bmeas^n](\blank \mid \bmx) \:\Vert\: \gaus \big)}$ can be bounded in terms of the second moment of $f^{(n)}$, since from the above display
\begin{align}
    \ex[\big]{\cI\big( [\sqrt{t}f^{(n)}_\#\bmeas^n](\blank \mid \bmx) \:\Vert\: \gaus \big)} 
    &= t\ex[\big]{\norm{\ex{f^{(n)}(\bmx,\bmtheta) \mid \bmx,\bmy}}^2} \\
    &\leq t \ex{f^{(n)}(\bmx,\bmtheta)},
\end{align}
where the last line is bounded above by $tnC$ for some constant $C>0$ and $n$ sufficiently large. By Condition~\ref{cond:eps_approx}, it follows that for $n$ sufficiently large also the $(d,k)$-multilinear approximation $\hat f$ satisfies
\begin{align}
    \ex[\big]{\cI\big( [\sqrt{t}\hat f^{(n)}_\#\bmeas^n](\blank \mid \bmx) \:\Vert\: \gaus \big)} \leq tC(n + \delta).
\end{align}
Then, an application of the HWI inequality~\eqref{eq:hwi} on both sides yields that for all $n$ sufficiently large
\begin{align}
    \frac{1}{n}\abs{F_n - \hat F_n} \leq t\left( \sqrt{C\delta + \delta^2} + \frac{\delta^2}{2} \right).
\end{align}
Since $\delta>0$ was chosen arbitrarily, we conclude that
\begin{align}
    \limsup_{n \to \infty} \big | F_n - \hat{F}_{n} \big| \le \eps
\end{align}
for any $\eps > 0$, as desired.

\subsection{Proof of Theorem~\ref{th:Flimit}} \label{ss:proof_Flimit}
We divide our proof in three parts. First, we present the MTP model and the main result for its asymptotic free energy of Chen et al.~\cite{chen2022}, which will play a central role in our proof. Then, we employ a lifting argument to show that any $(d,k)$-multilinear model can be seen as a special MTP in a lifted space. Finally, we establish some orthogonal invariance properties of the free energy (as well as the potential function that characterizes it) and leverage them to establish Theorem~\ref{th:Flimit}.
\paragraph{Preliminaries: Free energy of MTP.}
We begin by presenting the order-$q$ MTP model, defined as the collection of observations $\bmy \coloneqq (y_{\alpha})_{\alpha \in [n]^q}$,
\begin{align} \label{eq:y_mtp_original}
    y_\alpha = \sqrt{\frac{t}{n^{q-1}}} B \big(\eta_{\alpha_1} \otimes \dots \otimes \eta_{\alpha_q}\big) + z_\alpha, \quad \alpha \in [n]^q,
\end{align}
where $z_\alpha \sim_\iid \normal(0,\id_m)$ independently of $\bmeta \coloneqq (\eta_1,\dots,\eta_n) \in (\bbR^d)^n$ and $B \in \bbR^{m \times d^q}$ is a known matrix. 

Similarly to the statement of Theorem~\ref{th:Flimit}, we augment the MTP observations with an auxiliary linear channel observation $\mathring\bmy$ parametrized by $\mathring S \in \psd^d$,
\begin{align} \label{eq:y_linear_appendix}
    \mathring y_i = \mathring S^{1/2} \eta_i + \tilde z_i, \quad i \in [n],
\end{align}
where $\tilde z_n \sim_\iid \normal(0,\id_d)$ independently of everything else. As in Section~\ref{ss:F_limit}, we let $H : \psd^d \to \bbR$ be the pointwise limit of the relative entropy functions $H_n$,
\begin{align}
    H_n : \mathring S \mapsto \frac{1}{n}\ex[\Big]{\log \int \exp\{ \mathring\cH^n_{\bmx,\bmtheta,\mathring\bmz}(\bmtheta'; \mathring S)\}  \, \bmeas^n(\dd\bmtheta' \mid \bmx) },
\end{align}
where we have assumed that $\eta_i = \phi(\theta_i)$ for all $i \in [n]$ and some bounded map $\phi : \Theta \to \bbR^d$. The Hamiltonian $\mathring \cH^n$ above is defined as in~\eqref{eq:hamiltonian_linear} by setting $k=1$. Before getting into our proof, we state in our notation the free energy convergence result of~\cite{chen2022} for the MTP model, which we will play a key role in our proof. 
\begin{thm}[\cite{chen2022}, Theorem 1.1] \label{th:chen}
    Consider observations $(\bmy,\mathring \bmy)$ defined in~\eqref{eq:y_mtp_original} and~\eqref{eq:y_linear_appendix}, and assume Condition~\ref{as:Flimit} holds. Then, for all $(\mathring S,t) \in \psd^d \times [0,\infty)$, the associated free energy $F_N(\mathring S,t)$ satisfies
    \begin{align}
        F_n(\mathring S,t) \to_{n \to \infty} \adjustlimits \sup_{Q \in \psd^d}\inf_{S \in \psd^d} \Big\{ H(S+\mathring S) - \frac{1}{2} \inner{Q}{S} + \frac{t}{2} \inner{B^\top \!B}{Q^{\otimes q}} \Big\}.
    \end{align}
\end{thm}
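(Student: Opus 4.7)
The plan is to establish matching asymptotic upper and lower bounds on $F_n(\mathring S,t)$ converging to $\sup_{Q}\inf_{S}\bigl\{H(S+\mathring S)-\tfrac{1}{2}\inner{Q}{S}+\tfrac{t}{2}\inner{B^\top B}{Q^{\otimes q}}\bigr\}$. The argument will be driven by the Bayes-optimal Nishimori identity, cavity/interpolation techniques, and a perturbation scheme to force concentration of the replica overlap $R_n \coloneqq \frac{1}{n}\sum_{i=1}^n \eta_i^{(1)}(\eta_i^{(2)})^\top$ onto the saddle point. Throughout, I will treat the enriched parameter $\mathring S$ as a placeholder and prove convergence pointwise in $(\mathring S,t)$.

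\textbf{Upper bound via Guerra--Toninelli interpolation.} Fix a candidate matrix $Q\in\psd^d$. I would construct an interpolating family parametrized by $s\in[0,1]$ connecting the full MTP model (at $s=1$) to an auxiliary linear channel whose signal matrix is $\Sigma(Q) \coloneqq \nabla_{Q_1}\bigl[\tfrac{t}{2}\inner{B^\top B}{Q_1\otimes Q^{\otimes(q-1)}}\bigr]\big|_{Q_1=Q}$ (at $s=0$). The key computation is to differentiate the interpolated free energy in $s$: after a Gaussian integration by parts and invoking the Nishimori identity, the derivative decomposes as a deterministic boundary term plus a remainder proportional to $\ex{\inner{B^\top B}{(R_n^{\otimes q} - \text{linear in }R_n\text{ around }Q)}}$, which by the multilinearity of the tensor map has a controlled sign regardless of whether $B^\top B$ is positive semidefinite. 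Legendre duality in $S$ (Fenchel-conjugating the term $-\tfrac{1}{2}\inner{Q}{S}+H(S+\mathring S)$) yields $F_n(\mathring S,t)\le \inf_S\Psi(Q,S)+o_n(1)$; taking the supremum over $Q$ closes the upper bound.

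\textbf{Lower bound via Aizenman--Sims--Starr cavity.} Write the telescoping decomposition $n F_n(\mathring S,t) = \sum_{k=0}^{n-1}\bigl[(k+1)F_{k+1}-kF_k\bigr]$. For each summand, I would perform a cavity computation that isolates the contribution of adding a single variable $\eta_{n+1}$: conditionally on the first $n$ variables, the new row of observations $(y_\alpha)_{\alpha\ni n+1}$ effectively behaves as a Gaussian linear channel for $\eta_{n+1}$ whose signal matrix is a polynomial in the Gibbs overlap of the first $n$ variables. Combining Nishimori with concentration of this overlap onto a deterministic matrix $Q^\ast$ (to be discussed below), the cavity free energy converges to the quantity $H(S^\ast+\mathring S)-\tfrac{1}{2}\inner{Q^\ast}{S^\ast}+\tfrac{t}{2}\inner{B^\top B}{(Q^\ast)^{\otimes q}}$, where $(Q^\ast,S^\ast)$ is the saddle of $\Psi$.

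\textbf{Main obstacle: forcing overlap concentration in the non--positive-semidefinite case.} As flagged in Remark~\ref{remark:psd}, when $B^\top B$ is not positive semidefinite as an operator on symmetric $q$-tensors, the overlap $R_n$ need not concentrate on a single deterministic matrix and the variational formula is genuinely a sup--inf saddle rather than a pure maximum. The standard remedy, which I would adopt following Mourrat--Panchenko and the Hamilton--Jacobi viewpoint of Chen et al., is to augment the model with a small family of Ghirlanda--Guerra linear perturbations indexed by an auxiliary positive-definite parameter $\mathring S'$; this forces overlap concentration for Lebesgue-almost every perturbation parameter. One then identifies the limit of $F_n$ as the unique viscosity solution on $\psd^d$ of a first-order Hamilton--Jacobi equation with initial data $H$ and Hamiltonian $Q\mapsto \tfrac{t}{2}\inner{B^\top B}{Q^{\otimes q}}$, whose Hopf--Lax representation is precisely the claimed sup--inf. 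Verifying that the limit is continuous in $\mathring S'$ at the origin, so that the perturbation can be removed without changing the limit, is the essential technical step.
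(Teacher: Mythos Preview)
The paper does not prove this statement: Theorem~\ref{th:chen} is quoted verbatim from \cite{chen2022} as a black-box input to the proof of Theorem~\ref{th:Flimit}, and no argument for it appears anywhere in the manuscript. There is therefore nothing in the paper to compare your proposal against.

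That said, your sketch is broadly in the spirit of the proof in \cite{chen2022}, which does proceed via a Hamilton--Jacobi formulation and identifies the limit as a Hopf--Lax formula. One point in your outline deserves caution: the claim that the Guerra--Toninelli remainder ``has a controlled sign regardless of whether $B^\top B$ is positive semidefinite'' is not correct as stated. In the Bayes-optimal setting the remainder after Nishimori is (schematically) a Gibbs average of a degree-$q$ polynomial in the overlap centered at $Q$, and for $q\ge 3$ or indefinite $B^\top B$ this polynomial is not convex, so the sign is not controlled by the standard interpolation alone. This is exactly why the sup--inf (rather than sup) formula arises, and why the Hamilton--Jacobi machinery (or an adaptive interpolation) is genuinely needed for both directions, not just for overlap concentration in the lower bound. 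Your third paragraph correctly identifies this as the main obstacle, but the first paragraph understates it by suggesting the upper bound closes via a plain Guerra argument.
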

\begin{remark}
    Compared to the definition of $\Psi$ appearing in~\cite{chen2022}, some terms appearing in the potential function are scaled by a factor of two. This is due to the different normalization conventions employed in the definition of the MTP $\eqref{eq:y_mtp_original}$, and the two formulations are completely equivalent.
\end{remark}

\paragraph{Step 1: Lifting argument.}
The MTP~\eqref{eq:y_mtp_original} is readily seen to be a special case of a $(d,k)$-multilinear model where $k=1$. In our proof, we show that one can go the other way and leverage prior convergence results to characterize the free energy of~\eqref{eq:y_approx}. For the reader's convenience, we recall the explicit definition of a $(d,k)$-multilinear model, consisting of observations
\begin{align} \label{eq:y_dk_appendix}
    y_\alpha = \sqrt{\frac{t}{n^{q-1}}} A_{\psi(x_{\alpha_1})\cdots \psi(x_{\alpha_q})} \big( \phi(\theta_{\alpha_1}) \otimes \dots \otimes \phi(\theta_{\alpha_q}) \big) + z_\alpha, \quad \alpha \in [n]^q,
\end{align}
for two bounded functions $\psi : \cX \to [k]$, $\phi : \Theta \to \bbR^d$ and known matrices $A_\kappa \in \bbR^{m \times d^q}$, $\kappa \in [k]^q$.

We let $\bmw \coloneqq (\psi(x_1),\dots,\psi(x_n)) \in [k]^n$, and for $j \in [k]$ we use $e_j$ to denote the $j$-th standard basis vector of $\bbR^k$. We use $\bmomega \in \bbR^{k \times n}$ to denote a matrix whose rows are one-hot encoding of $\bmw$, i.e. $\omega_i = e_j \iff w_i=j$. We define the lifted features
\begin{align}
\widehat{\eta}_i & =  \omega_i \otimes \eta_i \in \bbR^{dk}. \label{eq:eta_bar}
\end{align}
From the properties of the Kronecker product, for every $q \in \N$ there exist a $(dk)^q \times (dk)^q$ permutation matrix $\Pi_q$ such that 
\begin{align} \label{eq:kron_commute}
  \Pi_q \, (\widehat\eta_{\alpha_1} \otimes \dots \otimes \widehat\eta_{\alpha_q})
   & = (\omega_{\alpha_1} \otimes \cdots \otimes \omega_{\alpha_q}) \otimes (\eta_{\alpha_1} \otimes \dots \otimes \eta_{\alpha_q})
\end{align}
for all $\alpha \in [n]^q$. Notice that $\Pi_1$ is trivially the identity matrix.

We introduce a matrix $\wA \in \bbR^{m \times (dk)^q}$ by horizontally stacking all matrices $A_\kappa$, $\kappa \in [k]^q$, in ascending order, and permuting on the right according to $\Pi_q$. That is
\begin{align}
\wA &\coloneqq \begin{bmatrix}
    A_{1\cdots1} & A_{1\cdots2} & \cdots & A_{1\cdots k} & A_{2\cdots1} & \cdots &  A_{k \cdots k}
\end{bmatrix}\, \Pi_q  \in \bbR^{m \times (dk)^q}.
\end{align}
With this, we see that \eqref{eq:y_dk_appendix} can be embedded into a lifted MTP model with parameters in $\bbR^{dk}$,
\begin{align} \label{eq:y_lift}
y_{\alpha} = \sqrt{\frac{t}{n^{q-1}}} \wA \,\big(\widehat\eta_{\alpha_1} \otimes \dots \otimes \widehat\eta_{\alpha_q}\big) + z_\alpha, \quad \alpha \in [n]^q.
\end{align}
We introduce the linear channel relative entropy $\widehat{H}_n : \psd^{dk} \to \bbR$ defined as
\begin{align}
    \widehat{H}_n(S) = \frac{1}{n} \ex[\Big]{ \log \int \exp \Big\{ \sum_{i=1}^n \inner[\big]{\omega_i \otimes \phi(\theta_i')}{ \,S^{1/2}\widehat{\eta}_i + \widehat z_i } + \norm{\omega_i \otimes \phi(\theta_i')}^2 \Big\} \, \bmeas^n(\dd\bmtheta'\mid \bmx) },
\end{align}
where $\widehat z_i \sim_\iid \normal(0,\id_{dk})$ independent of everything else. Note that as long as Condition~\ref{as:Flimit} is satisfied then $\widehat{H}_n$ converges pointwise to a well-defined and continuously differentiable limit, which we denote as $\widehat H : \psd^{dk} \to \bbR$. By Theorem~\ref{th:chen}, the free energy $\widehat F_n$ for~\eqref{eq:y_lift} converges pointwise to the solution to the variational problem
\begin{align}
    \widehat F_n(R,t) = \adjustlimits \sup_{Q \in \psd^{dk}}\inf_{S \in \psd^{dk}} \widehat \Psi(Q,S),
\end{align}
where $\widehat \Psi(Q,S) : \psd^{dk}\times\psd^{dk}\to\bbR$ is defined for all $\mathring S \times t \in \psd^{dk} \times [0,\infty)$ as
\begin{align}
    \widehat \Psi(Q,S) = \widehat H(S + \mathring S) - \frac{1}{2}\inner{Q}{S} - \frac{t}{2} \inner{\widehat{A}^\top \!\widehat A}{\,Q^{\otimes q}},
\end{align}
completing the reduction a generic $(d,k)$-multilinear approximation to an instance of the MTP model.

\paragraph{Step 2: Orthogonal invariance of free energy.}
Before proceeding, we state the following intermediate results, which are proved in Appendix~\ref{ap:approx_lemmas}.
\begin{lem} \label{lem:F_invariance}
    Let $\kappa \in [k]^{q}$. Define matrices 
    \begin{align}
        U_\kappa \coloneqq \Pi_q^\top \big( e_{\kappa_1}e_{\kappa_1}^\top \otimes \cdots \otimes e_{\kappa_q}e_{\kappa_q}^\top \otimes \id_{d^q} \big) \Pi_q
    \end{align}
    as well as mapping $\cP_q : \bbR^{(dk)^q \times (dk)^q} \to \bbR^{(dk)^q \times (dk)^q}$
    \begin{align} \label{eq:projectors}
        \cP_q : M \mapsto \sum_{\kappa \in [k]^q} U_\kappa M U_\kappa.
    \end{align}
    Then the free energy $F_n^\cP$ associated with observation model $\bmy^\cP \coloneqq (y_\alpha^\cP)_{\alpha \in [n]^q}$
    \begin{align}
        y_\alpha^\cP = \sqrt{\frac{t}{n^{q-1}}} \cP_q(\widehat{A}^\top\widehat{A})^{1/2} \: (\widehat\eta_{\alpha_1} \otimes \dots \otimes \widehat \eta_{\alpha_q}) + z_\alpha^\cP,
    \end{align}
    for $z_\alpha^\cP \sim_\iid \normal(0, \id_{dk})$ independent of everything else, satisfies the equality
    \begin{align}
        \widehat F_n(\mathring S,t) = F_n^\cP(\mathring S,t)
    \end{align}
    for all $(\mathring S,t) \in \psd^{dk} \times [0,\infty)$.
\end{lem}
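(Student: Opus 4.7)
The plan is to show that the Hamiltonians of the augmented observations $(\widehat\bmy, \mathring\bmy)$ and $(\bmy^\cP, \mathring\bmy)$ agree in joint distribution as random fields indexed by $\bmtheta'$; since the free energy depends only on this distribution, the identity $\widehat F_n = F_n^\cP$ follows at once. The key structural observation is that each lifted tensor $\widehat\eta_{\alpha_1} \otimes \cdots \otimes \widehat\eta_{\alpha_q}$ is supported on a single coordinate block determined by the covariates $\bmx$, and on this block $\widehat A^\top \widehat A$ and $\cP_q(\widehat A^\top \widehat A)$ define the same quadratic form.

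First I would verify that the family $\{U_\kappa : \kappa \in [k]^q\}$ consists of pairwise orthogonal projections on $\bbR^{(dk)^q}$ summing to the identity. This reduces to the analogous properties of the rank-one projectors $e_{\kappa_i} e_{\kappa_i}^\top$ on $\bbR^k$, combined with multiplicativity of the Kronecker product and orthogonality of the permutation matrix $\Pi_q$. As a by-product, each summand $U_\kappa \widehat A^\top \widehat A U_\kappa$ is positive semidefinite, so $\cP_q(\widehat A^\top \widehat A)$ is psd and its principal square root is well-defined. Next I would use \eqref{eq:kron_commute} to show that, setting $\kappa(\alpha) := (\psi(x_{\alpha_1}), \ldots, \psi(x_{\alpha_q}))$, the lifted tensor $\widehat\eta_{\alpha_1} \otimes \cdots \otimes \widehat\eta_{\alpha_q}$ lies in the image of $U_{\kappa(\alpha)}$: indeed, $\Pi_q$ maps it to $(\omega_{\alpha_1} \otimes \cdots \otimes \omega_{\alpha_q}) \otimes (\eta_{\alpha_1} \otimes \cdots \otimes \eta_{\alpha_q})$, and the one-hot leading factor is fixed precisely by $e_{\kappa(\alpha)_1} e_{\kappa(\alpha)_1}^\top \otimes \cdots \otimes e_{\kappa(\alpha)_q} e_{\kappa(\alpha)_q}^\top$. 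Crucially, the same inclusion holds for the posterior sample $\widehat\eta'_i = \omega_i \otimes \phi(\theta'_i)$, since $\omega_i$ depends only on $\bmx$, which is shared between $\bmtheta$ and $\bmtheta'$.

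Combining these ingredients, for any symmetric $M \in \bbR^{(dk)^q\times(dk)^q}$ and any $\alpha$, orthogonality of the $U_\kappa$ yields
\[
\inner[\big]{\widehat\eta'_{\alpha_1} \otimes \cdots}{M(\widehat\eta_{\alpha_1} \otimes \cdots)} = \inner[\big]{\widehat\eta'_{\alpha_1} \otimes \cdots}{U_{\kappa(\alpha)} M U_{\kappa(\alpha)} (\widehat\eta_{\alpha_1} \otimes \cdots)} = \inner[\big]{\widehat\eta'_{\alpha_1} \otimes \cdots}{\cP_q(M)(\widehat\eta_{\alpha_1} \otimes \cdots)}.
\]
Applied with $M = \widehat A^\top \widehat A$, this shows that the deterministic cross-term $\inner{\widehat f^{(n)}(\bmtheta')}{\widehat f^{(n)}(\bmtheta)}$ and the self-energy $\norm{\widehat f^{(n)}(\bmtheta')}^2$ coincide pathwise in $(\bmx, \bmtheta, \bmtheta')$ with their $\bmy^\cP$-counterparts. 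The residual Gaussian-noise piece is a centered Gaussian process in $\bmtheta'$ whose covariance kernel is $\inner{f^{(n)}(\bmtheta'_1)}{f^{(n)}(\bmtheta'_2)}$, and the same identity shows this kernel is unchanged when $\widehat A$ is replaced by $\cP_q(\widehat A^\top \widehat A)^{1/2}$. Hence the two Hamiltonians have identical finite-dimensional distributions as random fields in $\bmtheta'$, and since the auxiliary linear channel $\mathring\bmy$ is the same in both models, the free energies coincide. The main technical hurdle is bookkeeping: tracking the commutation action of $\Pi_q$ on the Kronecker factors and verifying that the covariate-dependent support structure transfers intact to the posterior variable $\bmtheta'$; once the projection identity on the lifted tensors is in place, the rest is routine algebra.
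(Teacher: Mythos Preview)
Your argument is correct and is in fact more direct than the paper's. The paper proceeds in two stages: first, it invokes orthogonal invariance of Gaussian channels to replace $\widehat A$ by $(\widehat A^\top\widehat A)^{1/2}$; second, it proves and applies a separate lemma (Lemma~\ref{lem:entropy_proj}) showing $I(S)=I(\cP(S))$ via an I-MMSE interpolation, and then translates mutual information back to free energy through identity~\eqref{eq:I_F_identity}. You collapse both steps into a single Hamiltonian comparison: once you observe that each lifted tensor $\widehat\eta_{\alpha_1}\otimes\cdots\otimes\widehat\eta_{\alpha_q}$ (and its primed counterpart, since $\bmomega$ depends only on $\bmx$) lies in the range of $U_{\kappa(\alpha)}$, the quadratic form in $\widehat A^\top\widehat A$ and in $\cP_q(\widehat A^\top\widehat A)$ agree pointwise, which simultaneously handles the symmetrization and the projection. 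Your approach is more elementary in that it avoids the I-MMSE machinery entirely; the paper's route has the advantage of isolating the projection-invariance principle as a reusable standalone result (Lemma~\ref{lem:entropy_proj}), which it later reuses for the $q=1$ case when reducing the variational problem. Either way, the structural insight---that the covariate-determined block support is shared by $\bmtheta$ and $\bmtheta'$---is the same.
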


\begin{lem} \label{lem:potential_diagonal}
Consider potential function $\Psi^\cP:\psd^{dk}\times\psd^{dk}\to\bbR$ defined as
\begin{align}\label{eq:potential_lift}
    \Psi^\cP(Q,S) = \widehat H(S + \mathring S) - \frac{1}{2}\inner{Q}{S} - \frac{t}{2} \inner[\big]{\cP_q(\widehat{A}^\top \!\widehat A)}{\,Q^{\otimes q}}.
\end{align}
Fix any two matrices $\widehat{Q},\widehat{S} \in \psd^{dk}$, and let $Q,S \in (\psd^d)^k$ be two collection of matrices such that $Q_j = (e_j^\top \otimes \id_d) \widehat{Q}(e_j \otimes \id_d)$ and $S_j = (e_j^\top \otimes \id_d) \widehat{S}(e_j \otimes \id_d)$, for all $j=1,\dots,k$. Let $\cP_1$ be the orthogonal projector on $\bbR^{dk \times dk}$ defined according to~\eqref{eq:projectors} for $q=1$. Then,
\begin{align}
    \Psi^\cP (\cP_1(\widehat{Q}),\cP_1(\widehat{S})) = \Psi(Q,S),
\end{align}
where $\Psi : (\psd^d)^k \times (\psd^d)^k \to \bbR$ the potential defined in~\eqref{eq:potential}. 
\end{lem}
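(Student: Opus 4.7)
The plan is to verify $\Psi^\cP(\cP_1(\widehat{Q}), \cP_1(\widehat{S})) = \Psi(Q,S)$ by matching each of the three summands of $\Psi^\cP$ against the corresponding summand of $\Psi$. The key structural observation is that for $q=1$ one has $\Pi_1 = \id$, so $\cP_1(M) = \sum_{j=1}^k (e_je_j^\top \otimes \id_d) M (e_je_j^\top \otimes \id_d)$ simply extracts the $k$ diagonal $d \times d$ blocks of a matrix viewed under the Kronecker decomposition $\bbR^{dk} = \bbR^k \otimes \bbR^d$. In particular $\cP_1(\widehat{Q}) = \sum_j (e_je_j^\top) \otimes Q_j$, and analogously for $\cP_1(\widehat{S})$.

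The quadratic cross term is immediate: the Frobenius inner product of two block-diagonal matrices decomposes as $\inner{\cP_1(\widehat{Q})}{\cP_1(\widehat{S})} = \sum_{j=1}^k \inner{Q_j}{S_j}$. For the entropy term, I would give an information-sufficiency argument. Since $\cP_1(\widehat{S}) + \mathring S$ is block diagonal (the lifted $\mathring S$ inheriting blocks $\mathring S_j$ from the original collection), its square root preserves block structure, and $(\cP_1(\widehat{S}) + \mathring S)^{1/2} \widehat{\eta}_i = \omega_i \otimes (S_{w_i} + \mathring S_{w_i})^{1/2} \phi(\theta_i)$ because $\widehat{\eta}_i = e_{w_i} \otimes \phi(\theta_i)$ is supported in the single block indexed by $w_i = \psi(x_i)$. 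The remaining $k-1$ blocks of the lifted observation are pure Gaussian noise independent of $\theta_i$ and hence uninformative; discarding them recovers exactly~\eqref{eq:y_ring}, giving $\widehat{H}(\cP_1(\widehat{S}) + \mathring S) = H(S + \mathring S)$.

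The interaction term requires the most bookkeeping. Writing $\widehat{A} = \bar{A}\, \Pi_q$ with $\bar{A} = [A_{1\cdots 1}, \ldots, A_{k\cdots k}]$, we obtain $\widehat{A}^\top\widehat{A} = \Pi_q^\top (\bar{A}^\top\bar{A}) \Pi_q$, and since $U_\kappa = \Pi_q^\top \tilde U_\kappa \Pi_q$ with $\tilde U_\kappa = (\bigotimes_l e_{\kappa_l}e_{\kappa_l}^\top)\otimes \id_{d^q}$, a short block computation yields $\cP_q(\widehat{A}^\top\widehat{A}) = \Pi_q^\top \bigl[\sum_\kappa (\bigotimes_l e_{\kappa_l}e_{\kappa_l}^\top) \otimes (A_\kappa^\top A_\kappa)\bigr] \Pi_q$. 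On the other side, expanding $\cP_1(\widehat{Q})^{\otimes q} = \bigl(\sum_j (e_je_j^\top) \otimes Q_j\bigr)^{\otimes q}$ and using the commutation~\eqref{eq:kron_commute} to collect the $\bbR^k$ factors ahead of the $\bbR^d$ factors gives $\cP_1(\widehat{Q})^{\otimes q} = \Pi_q^\top \bigl[\sum_\kappa (\bigotimes_l e_{\kappa_l}e_{\kappa_l}^\top) \otimes (Q_{\kappa_1}\otimes\cdots\otimes Q_{\kappa_q})\bigr] \Pi_q$. Orthogonality of $\Pi_q$, the Kronecker factorization $\inner{A\otimes B}{C\otimes D} = \inner{A}{C}\inner{B}{D}$, and $\inner{e_\kappa e_\kappa^\top}{e_{\kappa'} e_{\kappa'}^\top} = \prod_l \delta_{\kappa_l,\kappa'_l}$ then collapse the resulting double sum onto the diagonal, yielding $\inner{\cP_q(\widehat{A}^\top \widehat{A})}{\cP_1(\widehat{Q})^{\otimes q}} = \sum_{\kappa\in [k]^q} \inner{A_\kappa^\top A_\kappa}{Q_{\kappa_1}\otimes\cdots\otimes Q_{\kappa_q}}$, matching the last term of $\Psi$.

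The main obstacle will be exactly this Kronecker-product bookkeeping: keeping track of how $\Pi_q$ interleaves the $q$ copies of the ``covariate'' space $\bbR^k$ with the $q$ copies of the ``parameter'' space $\bbR^d$, both when expressing $\cP_q(\widehat{A}^\top\widehat{A})$ in block form and when expanding $\cP_1(\widehat{Q})^{\otimes q}$ as a sum of Kronecker products. Once~\eqref{eq:kron_commute} is applied carefully, the remaining algebra is routine and the three matching identities combine to establish the claim.
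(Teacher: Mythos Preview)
Your proposal is correct and follows essentially the same route as the paper: match the three summands separately, using the block-diagonal structure $\cP_1(\widehat Q)=\sum_j e_je_j^\top\otimes Q_j$ for the cross term, the Kronecker commutation~\eqref{eq:kron_commute} plus orthogonality of the $e_{(\kappa)}$ to collapse the interaction term, and an invariance argument for the entropy term. The only cosmetic difference is that the paper dispatches the entropy identity by invoking Lemma~\ref{lem:entropy_proj} (the general projection invariance of the mutual information) rather than your explicit ``the remaining $k-1$ blocks are pure noise'' argument, but these are the same observation.
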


We are now ready to finish the proof of our main result.
From Lemma~\ref{lem:F_invariance}, it follows that the large-$n$ limit of $\widehat F_n$ is exactly equal to the limit of $F_n^\cP$ for all $(\mathring S,t)$, so that by Theorem~\ref{th:chen}
\begin{align}
    \lim_{n \to \infty} \widehat F_n(\mathring S,t) = \adjustlimits \sup_{\widehat{Q} \in \psd^{dk}} \inf_{\widehat{S} \in \psd^{dk}} \Psi^\cP (\widehat{Q},\widehat{S}).
\end{align}
All that is left to complete the proof, then, is to show the equality
\begin{align}
    \adjustlimits \sup_{Q \in (\psd^{d})^k} \inf_{S \in (\psd^{d})^k}  \Psi(Q,S) = \adjustlimits \sup_{\widehat{Q} \in \psd^{dk}} \inf_{\widehat{S} \in \psd^{dk}} \Psi^\cP(\widehat{Q},\widehat{S}).
\end{align}
Finally, from Lemma~\ref{lem:potential_diagonal}, the proof is completed by showing that 
\begin{align}
    \adjustlimits \sup_{\widehat{Q} \in \psd^{dk}} \inf_{\widehat{S} \in \psd^{dk}} \Psi^\cP(\cP_1(\widehat{Q}),\cP_1(\widehat{S})) = \adjustlimits \sup_{\widehat{Q} \in \psd^{dk}} \inf_{\widehat{S} \in \psd^{dk}} \Psi^\cP(\widehat{Q},\widehat{S}).
\end{align}
Using again Lemma~\ref{lem:F_invariance} in the case $q=1$ (or equivalently Lemma~\ref{lem:entropy_proj} in Appendix~\ref{ap:approx_lemmas}), we have that $\widehat H(\widehat{S})= \widehat H(\cP_1(\widehat{S}))$, and we can write
\begin{align}
    \widehat H(\widehat{S} + \mathring S) - \frac{1}{2} \inner{\widehat{Q}}{\widehat{S}} = \widehat H(\cP_1(\widehat{S} + \mathring S)) - \frac{1}{2} \inner{\cP_1(\widehat{Q})}{\cP_1(\widehat{S})} - \frac{1}{2} \inner{\cP_1^\perp(\widehat{Q})}{\cP_1^\perp(\widehat{S})}
\end{align}
where $\cP_1^\perp \coloneqq \operatorname{Id} - \cP_1$ is the orthogonal complement of $\cP_1$. From this, it follows that for all $\widehat{Q}$ such that $\cP_1^\perp(\widehat{Q})\neq 0$, there exists a sequence of matrices $\widehat S_n$, $\cP_1^\perp(\widehat{S_n})\neq 0$, such that
\begin{align}
    \sup_{n \in \bbN} \widehat H(\cP_1(\widehat S_n + \mathring S)) < \infty \quad \text{and}\quad\lim_{n \to \infty} \inner{\cP_1^\perp(\widehat{Q})}{\cP_1^\perp(\widehat{S_n})} = + \infty,
\end{align}
from which we deduce that, for any such $\widehat Q$,
\begin{align}
    \inf_{\widehat{S} \in \psd^{dk}} \widehat H(\widehat{S} + \mathring S) - \frac{1}{2} \inner{\widehat{Q}}{\widehat{S}} = -\infty.
\end{align}
This implies that it must be the case that
\begin{align}
\adjustlimits \sup_{\widehat{Q} \in \psd^{dk}} \inf_{\widehat{S} \in \psd^{dk}} {\Psi}^\cP(\widehat{Q},\widehat{S}) = \adjustlimits \sup_{\widehat{Q} \in \psd^{dk}} \inf_{\widehat{S} \in \psd^{dk}} {\Psi}^\cP(\cP_1(\widehat{Q}),\cP_1(\widehat{S})).
\end{align}
Putting everything together, we have that 
\begin{align}
    \lim_{n \to \infty} F_n(\mathring S,t) = \adjustlimits \sup_{Q \in (\psd^{d})^k} \inf_{S \in (\psd^{d})^k} \Psi(Q,S),
\end{align}
thus concluding the proof of Theorem~\ref{th:Flimit}.

\section{Proofs for heteroskedastic spiked tensor model}\label{s:proof_hlr}
Before getting into the proof, we introduce some necessary concepts and notation. For the set $[0,1]^q$ endowed with Lebesgue measure, let $\vol(B)$ be the Lebesgue measure associated with any measurable subset $B \subseteq [0,1]^q$. Let $f \in L^2([0,1]^q)$ and $\cI$ be a measurable partition of $[0,1]$ into finitely many non-null sets. Denote its cardinality by $|\cI|$. 

We say that a sequence of measurable partitions $(\cI_k)_{k \in \bbN}$ of $[0,1]$ \textit{eventually separates points} if, for every distinct points $x\neq y \in [0,1]$, there are at most finitely many $n \in \N$ for which $x,y \in I_{k,j}$ for a single partition set $I_{k,j} \in \cI_k$, $j=1,\dots,|\cI|$. 

Define the \textit{stepping operator} under $\cI$ as the map $\cP_\cI : L^2([0,1]^q) \to L^2([0,1]^q)$,
\begin{align}
    \cP_\cI f(x_1,\dots,x_p) = \frac{1}{\vol(R)} \int_R f(x_1,\dots,x_q) \dd(x_1,\dots,x_q)
\end{align}
whenever $(x_1,\dots,x_q) \in R$ for some $R \in \cI^q$, i.e. $R$ is a $q$-wise cartesian product of elements of $\cI$. It is readily verified that for any partition $\cI$ the operator $\cP_\cI$ is an orthogonal projection of $L^2([0,1]^q)$ onto the subspace of square-integrable step functions that are constant over elements of $\cI^q$.

A key observation is that, since any bounded measurable function $f$ can be approximated almost everywhere and hence in the $L^2$ sense by uniform integrability, any sequence of partitions $(\cI_k)_{k \in \bbN}$ that eventually separates points induces a sigma-algebra that is fine enough to approximate $f$ via a sequence of steppings.

\begin{prop}[\cite{lovasz:2012large}, Proposition 9.8] \label{prop:stepping_convergence}
    Let $(\cI_k)_{k \in \bbN}$ be a sequence of measurable partitions of $[0,1]$ that eventually separates points, and $f \in L^2([0,1]^q)$ be a bounded function. Then the sequence of steppings $\cP_{\cI_k} f$ converges almost everywhere to $f$ as $n \to \infty$.
\end{prop}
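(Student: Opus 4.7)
The plan is to interpret $\cP_{\cI_k} f$ as the conditional expectation $\bbE[f \mid \cF_k]$, where $\cF_k \coloneqq \sigma(\cI_k^q)$ is the finite sigma-algebra on $[0,1]^q$ generated by the product partition $\cI_k^q$, and to combine a shrinking-volume estimate with a martingale-convergence argument applied to a refinement.

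First, I would establish the key volumetric consequence of the eventually-separating hypothesis: for every $x \in [0,1]^q$ the product cell $R_k(x) \in \cI_k^q$ containing $x$ satisfies $\vol(R_k(x)) \to 0$. Writing $\vol(R_k(x)) = \int \mathbf{1}_{\{y \sim_k x\}} \dd y$, where $\sim_k$ means ``in the same cell as,'' the hypothesis lifts from $[0,1]$ to $[0,1]^q$ (for $x \neq y$ some coordinate pair differs and is eventually separated), so $\mathbf{1}_{\{y \sim_k x\}} = 0$ for all large $k$ at each fixed $y \neq x$, and the bounded convergence theorem closes the bound. Next, I would pass to the common-refinement filtration $\tilde\cF_k \coloneqq \cF_1 \vee \cdots \vee \cF_k$: since every $\cF_k$ is finite, $\bigcup_k \cF_k$ is countable and separates points of the Polish space $[0,1]^q$, so Blackwell's theorem identifies $\sigma\big(\bigcup_k \tilde\cF_k\big)$ with the full Borel sigma-algebra, and Doob's upward martingale convergence yields $\bbE[f \mid \tilde\cF_k] \to f$ almost everywhere and in $L^2$.

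The main obstacle is transferring this convergence from the refinements $\tilde\cF_k$ back to the non-nested sigma-algebras $(\cF_k)$, for which the usual martingale machinery does not directly apply. I would handle this through a density argument: approximate $f$ in $L^2$ by a continuous function $g$ on the compact cube, and verify directly that $\cP_{\cI_k} g(x) \to g(x)$ almost everywhere using uniform continuity together with the shrinking-volume bound. The technical heart of the argument is the claim that for a.e.\ $x$ the uniform law on $R_k(x)$ concentrates weakly at $x$---a strengthening of $\vol(R_k(x)) \to 0$ that uses the full pairwise-separation hypothesis (not just its measure-theoretic consequence), since cells $R_k(x)$ are only a priori measurable sets and need not shrink to $x$ in diameter. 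The residual error $\cP_{\cI_k}(f - g)$ is then controlled in $L^2$ by the operator-norm estimate $\|\cP_{\cI_k}(f - g)\|_{L^2} \leq \|f - g\|_{L^2}$ (each $\cP_{\cI_k}$ is an orthogonal projection) combined with the boundedness of $f$ and $g$, and an auxiliary maximal-type bound available for averages against refining partitions lifts $L^2$ control to almost-everywhere convergence along the full sequence.
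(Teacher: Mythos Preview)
The paper does not supply its own proof of this proposition: it is quoted as a black-box citation from Lov\'asz's monograph, and in the only place it is invoked (the proof of Theorem~\ref{th:hlr}) the authors immediately assume without loss of generality that $\cI_{k+1}$ is a refinement of $\cI_k$. Under that extra assumption your first two steps already finish the job, since $\cP_{\cI_k}f=\bbE[f\mid\cF_k]$ is then an honest martingale and Doob's theorem gives a.e.\ convergence directly.

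For the proposition in the generality stated, the two places you flag as the ``technical heart'' are precisely where the argument is incomplete. First, the claim that the uniform law on $R_k(x)$ concentrates weakly at $x$ does not follow from what you have established. You correctly get $\mathbf{1}_{\{y\in R_k(x)\}}\to 0$ for each $y\neq x$, and hence $\vol(R_k(x))\to 0$ as well as $\vol(R_k(x)\cap U^c)\to 0$ for any neighborhood $U$ of $x$; but both of these vanish, and nothing you have written controls the \emph{ratio} $\vol(R_k(x)\cap U^c)/\vol(R_k(x))$. For measurable (non-interval) partition cells one can arrange this ratio to stay bounded away from zero along a subsequence at a given point, so the step from ``volume shrinks'' to ``averages of continuous $g$ converge'' needs more than you have supplied. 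Second, the ``auxiliary maximal-type bound'' you invoke at the end is exactly the kind of inequality (Doob's, or Hardy--Littlewood) that requires either a nested filtration or a regular family of sets; it is not available for a general non-refining sequence of measurable partitions, and invoking it here assumes away the very difficulty you are trying to overcome. The contraction $\|\cP_{\cI_k}(f-g)\|_{L^2}\le\|f-g\|_{L^2}$ gives $L^2$-smallness uniformly in $k$, but that alone does not upgrade to a.e.\ smallness without a maximal operator bound, which you have not justified in this setting.
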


\subsection{Proof of Theorem~\ref{th:hlr}} The main idea of our approach is to exploit a sequence of steppings of the limiting inverse-variance function $\Omega$ to decouple the variance structure from problem dimension $n$. The proof proceeds in two main steps: first, we use the convergence of steppings to construct a sequence of approximation models to~\eqref{eq:y_hetero} whose free energy we can control; secondly, we show that the free energy of the sequence of approximating models is well-described by the functional variational formula prescribed by Theorem~\ref{th:hlr}. Note that we will prove the result for the no-side-information case (i.e. $\mathring S=0)$, but all our arguments are translation invariant so that our result holds over the whole range $\cS([0,1]) \times [0,\infty)$.

\paragraph{Step 1: Approximating sequence.} For every $k \in \bbN$, define a partition $\cI_k$ of the unit interval $[0,1]$, constructed so that the sequence $(\cI_k)_{k \in \bbN}$ eventually separates points. Let us assume without loss of generality that $|\cI_k|=k$ and $\cI_{k+1}$ is a refinement of $\cI_k$, for all $k \in \bbN$.

Condition~\ref{as:hlr} together with Proposition~\ref{prop:stepping_convergence} implies that, for every $\eps > 0$, there exists some integer $k \in \bbN$ such that the approximating observations
\begin{align}
    y^{(k)}_\alpha = \sqrt{\frac{t}{n^{q-1}}} \sqrt{[\cP_{\cI_k}\Omega^2](x_{\alpha_1},\dots,x_{\alpha_q})}\; \inner{b}{\theta_{\alpha_1}\otimes \dots \otimes \theta_{\alpha_q}} + z_{\alpha}, \quad \alpha \in [n]^q
\end{align}
satisfy Condition~\ref{cond:eps_approx} (from the triangle inequality $\bmy^{(k)}$ verifies~\eqref{eq:W2eps_simple}), so that by Theorem~\ref{th:approx} the associated free energy $F_n^{(k)}$ is $\eps$-close to $F_n$ as $n \to \infty$, that is
\begin{align} \label{eq:stepping_approx}
    \limsup_{n \to \infty} \, \abs[\big]{F_n - F_n^{(k)}} \leq \eps.
\end{align}
Furthermore, under the assumption that $\theta_i \sim_\iid \bmeas$ from a compactly supported measure and that the limiting inverse-variance profile $\Omega$ is bounded, Condition~\ref{th:Flimit} can be verified via the Efron-Stein inequality (see e.g.,~\cite[Appendix C]{reeves2020}).

From now on, we write $\Omega_k^2 \coloneqq \cP_{\cI_k}\Omega^2$ and for any $(j_1, \dots, j_q) \in [k]^q$ we use $\Omega_k^2(j_1,\dots,j_q)$ to denote the constant value over $\cI_{k,j_1} \times \dots \times \cI_{k,j_q}$ of $\Omega_k^2$.
Letting $H : \psd^d \to \bbR$ be the relative entropy $\div[\big]{(S^{1/2})_\#\bmeas * \gaus}{\gaus}$, the potential function $\Psi^{(k)}: (\psd^d)^k \times (\psd^d)^k \to \bbR$ prescribed by Theorem~\ref{th:Flimit} is
\begin{align}
    \Psi^{(k)}(Q,S) &= \sum_{j=1}^k \beta_j^{(k)} H(S_j) - \frac{1}{2} \sum_{j=1}^k \beta_j^{(k)}\inner{Q_j}{S_j} \\
    & \qquad + \frac{t}{2} \sum_{j_1,\dots,j_q=1}^k \beta_{j_1}^{(k)}\cdots\beta_{j_q}^{(k)} \Omega_k^2(j_1,\dots,j_q) \; \inner{bb^\top}{Q_{j_1}\otimes \dots \otimes Q_{j_q}},
\end{align}
where $\beta_j^{(k)} \in [0,1]$ is the limiting proportion of entries $x_i$ that fall within partition element $\cI_{k,j}$ for $j=1,\dots,k$. Then, we have that
\begin{align} \label{eq:stepping_Flimit}
    \lim_{n \to \infty} F_n^{(k)} = \adjustlimits\sup_{Q \in (\psd^d)^k} \inf_{S \in (\psd^d)^k} \Psi^{(k)}(Q,S) \eqqcolon F^{(k)}.
\end{align}
Combining displays~\eqref{eq:stepping_approx} and~\eqref{eq:stepping_Flimit}, we obtain that for every $\eps>0$ there exists some $\bar k \in \bbN$ such that, for all $k,k' \geq \bar k$, $\abs{F^{(k)}-F^{(k')}} \leq \eps$. Then, $(F^{(k)})_{k \in \bbN}$ is a Cauchy sequence and, by completeness, $F^{(k)} \to F$ as $k \to \infty$ for some well-defined $F$, so that 
\begin{align}
    \lim_{n \to \infty} F_n = \lim_{k \to \infty} F^{(k)} = F.
\end{align}

\paragraph{Step 2: Monotonicity of variational problems.} Let us now focus on the potential function $\Psi^{(k)} : (\psd^d)^k \times (\psd^d)^k \to \bbR$. As a first observation, notice that the optimization of $\Psi^{(k)}$ is equivalent to optimizing a functional $\Psi_k : \cS([0,1]) \times \cS([0,1]) \to \bbR$,
\begin{align}
    \Psi_k(\sfQ,\sfS) &= \int_{[0,1]} H(\sfS(x)) \dd x + \frac{1}{2} \int_{[0,1]} \inner{\sfS(x)}{\sfQ(x)} \dd x \\
    & \quad + \frac{t}{2}\int_{[0,1]^q} \Omega_k^2(x_1,\dots,x_q) \: \inner[\big]{bb^\top}{ \sfQ(x_1) \otimes \dots \otimes \sfQ(x_q) } \dd(x_1,\dots,x_q)
\end{align}
under the constraint that $\sfQ,\sfS \in \cS([0,1])$ are constant over elements of $\cI_k$. Letting $\cS(\cI_k)$ denote this restricted space, we have that for all $k \in \bbN$
\begin{align}
    \adjustlimits\sup_{Q \in (\psd^d)^k} \inf_{S \in (\psd^d)^k} \Psi^{(k)}(Q,S) = \adjustlimits\sup_{\sfQ \in \cS(\cI_k)} \inf_{\sfS \in \cS(\cI_k)} \Psi_k(\sfQ,\sfS).
\end{align}
We observe that, for any $\sfQ \in \cS(\cI_k)$ and $x \in [0,1]$, the restriction that $\sfS \in \cS(\cI_k)$ instead of $\cS([0,1])$ is immaterial since $H$ depends on $x$ exclusively through $\sfS(x)$. As such, we have that
\begin{align}
    \inf_{\sfS \in \cS(\cI_k)} \int_{[0,1]} H(\sfS(x)) - \frac{1}{2} \inner{\sfS(x)}{\sfQ(x)} \dd x &= \inf_{\sfS \in \cS([0,1])} \int_{[0,1]} H(\sfS(x)) - \frac{1}{2} \inner{\sfS(x)}{\sfQ(x)} \dd x \\
    &= - \int_{[0,1]} H^*(\sfQ(x)/2) \dd x
\end{align}
where $H^*$ denotes the (pointwise) convex conjugate of $H$ at each $\sfQ(x)/2 \in \psd^d$. Substituting back into the potential function, we obtain 
\begin{align}
    F^{(k)} = \sup_{\sfQ \in \cS(\cI_k)} & \Big\{\frac{t}{2}\int_{[0,1]^q} \Omega_k^2(x_1,\dots,x_q)\: \inner[\big]{bb^\top}{ \sfQ(x_1) \otimes \dots \otimes \sfQ(x_q) } \dd(x_1,\dots,x_q)\\
    & \qquad - \int_{[0,1]} H^*(\sfQ(x)/2) \dd x \Big\}.
\end{align}
Define the map $W : \cS([0,1]) \to L^2([0,1]^q)$, whose output for an input $\sfQ \in \cS([0,1])$ is the function $W(\sfQ): (x_1,\dots,x_q) \mapsto \inner[]{bb^\top}{ \sfQ(x_1) \otimes \dots \otimes \sfQ(x_q) }$. 
From the multilinear structure of $W(\sfQ)$, notice that whenever $\sfQ \in \cS(\cI_k)$ the function $W(\sfQ)$ is invariant to the stepping operator over $\cI_k$. Conversely, for every $\sfQ' \in \cS([0,1])$ there exists some piecewise constant function $\sfQ \in \cS(\cI_k)$ such that $\cP_{\cI_k}[W(\sfQ')] = W(\sfQ)$. This implies that
\begin{align}
    \cP_{\cI_k}[W(\sfQ)] = W(\sfQ) \iff \sfQ \in \cS(\cI_k).
\end{align}
Since the stepping operator is an orthogonal projection in $L^2([0,1]^q)$, then, we deduce that for every $k \in \bbN$ and $\sfQ' \in \cS([0,1])$ there exists some $\sfQ \in \cS(\cI_k)$ such that
\begin{align}
    &\int_{[0,1]^q} \Omega_k^2(x_1,\dots,x_q) \cdot [W(\sfQ')](x_1,\dots,x_q) \dd(x_1,\dots,x_q) \\
    &\qquad = \int_{[0,1]^q} [\cP_{\cI_k}\Omega^2](x_1,\dots,x_q) \cdot \cP_{\cI_k}[W(\sfQ')](x_1,\dots,x_q) \dd(x_1,\dots,x_q)\\
    &\qquad = \int_{[0,1]^q} \Omega^2(x_1,\dots,x_q) \cdot [W(\sfQ)](x_1,\dots,x_q) \dd(x_1,\dots,x_q).
\end{align}
This allows us to conclude that the unique potential function $\Psi : \cS([0,1]) \times \cS([0,1]) \to \bbR$
\begin{align}
    \Psi(\sfQ,\sfS) = \int_{[0,1]} \Big( H(\sfS) - \frac{1}{2} \inner{\sfS}{\sfQ} \Big) \dd x
    + \frac{t}{2}\int_{[0,1]^q} \Omega^2 \cdot W(\sfQ) \dd(x_1,\dots,x_q)
\end{align}
allows to express each $F^{(k)}$ as the solution to the constrained optimization problem
\begin{align}
    F^{(k)} = \adjustlimits \sup_{\sfQ \in \cS(\cI_k)} \inf_{\sfS \in \cS([0,1])} \Psi(\sfQ,\sfS).
\end{align}
Since the sequence $(\cI_k)_{k \in \bbN}$ is a sequence of successively finer partitions that eventually separates points, we have that $F^{(k)} \leq F^{(k+1)}$ for all $k \in \bbN$, and by monotonicity it follows that
\begin{align} 
    F = \lim_{k \to \infty} F^{(k)} = \adjustlimits \sup_{\sfQ \in \cS([0,1])} \inf_{\sfS \in \cS([0,1])} \Psi(\sfQ,\sfS).
\end{align}
Putting the conclusions of the two proof steps together, we conclude that
\begin{align}
    \lim_{n \to \infty} F_n = \adjustlimits \sup_{\sfQ \in \cS([0,1])} \inf_{\sfS \in \cS([0,1])} \Psi(\sfQ,\sfS),
\end{align}
completing the proof of Theorem~\ref{th:hlr}.

\section{Proofs for \texorpdfstring{$q$-adic}{q-adic} assignment problem} \label{s:proof_qap}
\subsection{Proof of Proposition~\ref{prop:qap_info_equivalence}} \label{ss:proof_qap_info_equivalence}
From the chain rule of mutual information, observe that 
    \begin{align}
        I(\sigma ;\bmu,\bmv,\bmy)  & =  \underbrace{I(\sigma ;\bmu)}_{=0}+ I(\sigma ; \bmy \mid \bmu ) + \underbrace{I( \sigma ; \bmv\mid \bmy, \bmu)}_{=0} 
    \end{align}
    where $I(\sigma; \bmu) = 0$ by independence and $I(\sigma; \bmv \mid \bmy, \bmu)$ since $\bmv$ is assumed to be a deterministic function of $\bmu$. Then,
    \begin{align}
        I(\sigma ; \bmy \mid \bmu ) & = I(\sigma, \bmtheta ; \bmy \mid  \bmu)-  \underbrace{I( \bmtheta ; \bmy \mid  \bmu,\sigma)}_{=0} \\ 
        &= I(\bmtheta ; \bmy \mid  \bmu) + \underbrace{I(\sigma ; \bmy \mid  \bmtheta,\bmu)}_{=0}.
    \end{align}
    Here $I( \bmtheta ; Y \mid  \bmu,\sigma)$ because $\bmtheta = \sigma\bmu$ and $I(\sigma; \bmy \mid \bmtheta,\bmu) = 0$ because $\sigma$ is independent of $\bmy$ given $(\bmu,\bmtheta)$. Finally, we note that $I(\sigma;\bmu \mid \bmx,\bmy) = 0$, so that $I(\sigma; \bmu,\bmv,\bmy) = I(\sigma; \bmv,\bmy)$, and we conclude that $I(\sigma; \bmy,\bmv) = I(\bmtheta; \bmy \mid \bmu)$ as desired.
\subsection{Entropy with permutations}
As a first step, we present and prove a general $L^2$ bound on the difference in free energy between Gaussian convolution of two measures. While this result can be shown directly via the HWI inequality, we present an alternative proof of the statement to keep the presentation self-contained.
\begin{lem} \label{lem:moment_bound}
    Let $\mu, \nu$ be two probability measures on $\bbR^n$ with finite second moments, and let $\gaus$ denote the standard Gaussian measure on $\bbR^n$. Then, for all pairs of random variables $(u,v) \in \bbR^n \times \bbR^n$ such that $u \sim \mu$ and $v \sim \nu$,
    \begin{align}
        \abs[\big]{\div{\mu * \gaus}{\gaus} - \div{\nu * \gaus}{\gaus}} \leq \sqrt{\bbE\norm{u}^2 \vee \bbE\norm{u}^2} \, \sqrt{\bbE\norm{u - v}^2} + \frac{1}{2}\bbE\norm{u - v}^2.
    \end{align}
\end{lem}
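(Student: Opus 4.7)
The plan is to prove the bound by a linear-interpolation argument that avoids HWI entirely. Fix any coupling $(u, v)$ and set $u_t \coloneqq u + t(v - u)$ for $t \in [0,1]$, with law $\mu_t$, so that $\mu_0 = \mu$ and $\mu_1 = \nu$. The bound will follow from the fundamental theorem of calculus once I control $\abs{\Phi'(t)}$ for the smooth function $\Phi(t) \coloneqq \div{\mu_t * \gaus}{\gaus}$.

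The key step is to derive the derivative identity
\[
\Phi'(t) = \ex{\inner{m_t(u_t + z)}{v - u}},
\]
where $m_t(y) \coloneqq \ex{U \mid U + Z = y}$ is the Bayes-optimal estimator of a $\mu_t$-distributed latent $U$ from Gaussian-noise observation $y$ (with $Z \sim \gaus$ independent of $U$). To derive it, I start from the decomposition $\div{\mu_t*\gaus}{\gaus} = \tfrac{1}{2}\ex{\norm{u_t}^2} - I(u_t; u_t + z)$ already used in the paper; the quadratic term differentiates to $\ex{\inner{u_t}{v-u}}$, and Tweedie's formula $\nabla \log p_t(y) = m_t(y) - y$ (with $p_t$ the density of $u_t + z$), plugged into a chain-rule computation on $\ex{\log p_t(u_t + z)}$ — where variations in $p_t$ integrate to zero because $\int \partial_t p_t\, dy = 0$ — contributes $\ex{\inner{m_t(u_t + z) - u_t}{v - u}}$, so the two $\ex{\inner{u_t}{v-u}}$ pieces cancel.

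Cauchy--Schwarz combined with the $L^2$-contractivity of conditional expectation, $\ex{\norm{m_t(U + Z)}^2} \le \ex{\norm{U}^2} = \ex{\norm{u_t}^2}$, then yields $\abs{\Phi'(t)} \le \sqrt{\ex{\norm{u_t}^2}}\sqrt{\ex{\norm{u-v}^2}}$. Expanding $u_t = u + t(v-u)$ and applying Cauchy--Schwarz to the cross term gives $\ex{\norm{u_t}^2} \le \bigl(\sqrt{\ex{\norm{u}^2}} + t\sqrt{\ex{\norm{u-v}^2}}\bigr)^2$, so $\sqrt{\ex{\norm{u_t}^2}} \le \sqrt{\ex{\norm{u}^2}} + t\sqrt{\ex{\norm{u-v}^2}}$. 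Integrating $\int_0^1 \abs{\Phi'(t)}\, dt$ produces exactly $\sqrt{\ex{\norm{u}^2}}\sqrt{\ex{\norm{u-v}^2}} + \tfrac{1}{2}\ex{\norm{u-v}^2}$; repeating the argument with the roles of $u$ and $v$ swapped gives the analogous bound with $\ex{\norm{v}^2}$, and the stated $\vee$-form follows (it is in fact the weaker upper bound, since $\max \ge \min$).

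The principal technical obstacle is justifying the differentiation rigorously: the interchange of $d/dt$ with both the outer expectation and the integration against the $t$-dependent density $p_t$ requires a uniform domination argument. This should follow from the finite second moments $\ex{\norm{u}^2}, \ex{\norm{v}^2} < \infty$ together with the $C^\infty$-smoothness and strict positivity of Gaussian convolutions, but the dominated-convergence hypotheses must be checked carefully. After that, the remaining estimates are routine.
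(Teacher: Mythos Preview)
Your argument is correct, and it reaches the same bound by a genuinely different interpolation from the paper's. The paper does \emph{not} interpolate the signal: it sets up a two-channel observation $y(t)=(\sqrt{t}\,u+z,\ \sqrt{1-t}\,v+z')$ with a joint prior given by the coupling $\gamma$, and uses the I-MMSE relation so that the $t$-derivative of the free energy is $\tfrac12\bigl(\ex{\|\hat u_t\|^2}-\ex{\|\hat v_t\|^2}\bigr)$ with $\hat u_t=\ex{u\mid y(t)}$, $\hat v_t=\ex{v\mid y(t)}$. The key step there is that $\hat u_t-\hat v_t=\ex{u-v\mid y(t)}$, so Jensen controls it by $\ex{\|u-v\|^2}$. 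Your route instead moves the signal along the segment $u_t=(1-t)u+tv$ and differentiates $\div{\mu_t*\gaus}{\gaus}$ directly via Tweedie, obtaining $\Phi'(t)=\ex{\inner{m_t(u_t+z)}{v-u}}$; the $L^2$-contractivity of conditional expectation and the triangle inequality for $\sqrt{\ex{\|u_t\|^2}}$ then do the work. Both approaches use the coupling in an essential way and yield the same final inequality (indeed both give the slightly sharper $\min$ rather than the stated $\max$). Your displacement-interpolation viewpoint is closer in spirit to the Otto--Villani calculus the paper invokes elsewhere, while the paper's two-channel argument stays entirely within the I-MMSE toolbox; either is a valid self-contained substitute for HWI here. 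The only point to tighten in a full write-up is the one you flag yourself: the dominated-convergence justification for differentiating under the integral, which is routine given the Gaussian smoothing and finite second moments.
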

\begin{proof}
    For any pair of random variables $(u,v) \sim \gamma \in \Gamma(\mu,\nu)$ and $t \in [0,1]$ and independent $z,z' \sim_{\iid} \normal(0,\id_n)$, we introduce the interpolating observation
    \begin{align}
        y(t) = \begin{dcases}
            \sqrt{t} u + z \\
            \sqrt{1-t} \, v + z'
        \end{dcases}
    \end{align}
    and let $H(t) : [0,1] \to \bbR$ be the relative entropy
    \begin{align}
        \ex[\bigg]{\log \!\bigg(\int{ \exp\!\Big\{ \sqrt{t} \inner{u + z}{u'} - \frac{t}{2} \norm{u'}^2 + \sqrt{1-t} \inner{v + z'}{v'} - \frac{1-t}{2} \norm{v'}^2 \Big\} \, \gamma(\dd u', \dd v') \bigg)}},
    \end{align}
    where the expectation is taken with respect to all random variables $(u,v,z,z')$. From the I-MMSE relation and the fundamental theorem of calculus,
    \begin{align}
        \div{\mu * \gaus}{\gaus} - \div{\nu * \gaus}{\gaus} &= \frac{1}{2} \int_0^1 H'(t) \dd t \\
        &= \frac{1}{2} \int_0^1 \ex*{\norm[\big]{\ex{u \mid y(t)}}^2} - \ex*[\big]{\norm[\big]{\ex{v \mid y(t)}}^2} \dd t.
    \end{align}
    Using the definitions $\hat u_t \coloneqq \ex{u \mid y(t)}$ and $\hat v_t \coloneqq \ex{v \mid y(t)}$, then, an application of Cauchy-Schwarz and Jensen's inequalities yield that
    \begin{align}
        \bbE\norm{\hat u_t}^2  - \bbE \norm{\hat v_t}^2 &= 2\ex{\inner{\hat u_t}{\hat u_t - \hat v_t}} - \ex{\norm{\hat u_t - \hat v_t}^2} \\
        &\le 2\sqrt{\ex{\norm{\hat u_t}^2}} \sqrt{\ex{\norm{\hat u_t - \hat v_t}^2}} +  \ex{\norm{\hat u_t - \hat v_t}^2}\\
        &\le   2\sqrt{ \ex{ \| u\|^2}} \sqrt{ \ex{ \| u -  v \|^2 }} + \ex{ \| u -  v\|^2} \\
        &= 2 \sqrt{\bbE\norm{u}^2}\sqrt{\bbE\norm{u-v}^2} + \bbE\norm{u-v}^2.
    \end{align}
    From a symmetric argument swapping the roles of $u$ and $v$, then, we conclude that
    \begin{align}
        \abs[\big]{\div{\mu * \gaus}{\gaus} - \div{\mu * \gaus}{\gaus}} &\leq \frac{1}{2} \int_0^1 2\sqrt{\bbE\norm{u}^2 \vee \bbE\norm{u}^2} \, \sqrt{\bbE\norm{u - v}^2} + \frac{1}{2}\bbE\norm{u - v}^2 \dd t \\
        &= \sqrt{\bbE\norm{u}^2 \vee \bbE\norm{u}^2} \, \sqrt{\bbE\norm{u - v}^2} + \frac{1}{2}\bbE\norm{u - v}^2.
    \end{align}
\end{proof}

Let $(u_i)_{i \in \bbN}$ be a sequence of points in $\bbR^d$. For $n \in \bbN$, define $\bmu \coloneqq (u_1,\dots,u_n)$, let $\sigma_n \sim \unif(\sym([n]))$ be a random permutation of $[n]$ drawn at uniformly at random from $\sym([n])$, and define $\bmtheta \coloneqq \sigma_n \bmu \coloneqq (u_{\sigma_n(1)}, \dots, u_{\sigma_n(n)})$. We denote the induced distribution on $\bmtheta$ by $\bmeas^n$ and  the empirical measure of $\bmu$ (equivalently $\bmtheta$) by $\pi_n \coloneqq \frac{1}{n}\sum_{i=1}^n \delta_{u_i}$.
Here, we will argue that the relative entropy $H_n : \psd^d \to \bbR$ defined as
\begin{align} \label{eq:permute_entropy}
    H_n(S) \coloneqq \frac{1}{n} \div[\big]{(S^{1/2} \otimes \id_n)_\# \bmeas^n * \gaus}{\gaus}
\end{align}
can be characterized in terms of the limiting behavior of the empirical measure $\pi_n$ as the sample size grows. 
\begin{lem} \label{lem:entropy_permutation_limit}
    For each $n \in \bbN$, let $H_n$ be defined according to~\eqref{eq:permute_entropy}. Assume $\pi_n$ converges in quadratic Wasserstein distance to some probability measure $\bmeas$ on $\bbR^d$ with bounded second moment, i.e. $W_2(\pi_n,\bmeas) \to 0$ as $n \to \infty$. Define the function $H(S) : \psd^d \to \bbR$
    \begin{align}
        H(S) \coloneqq \div[\big]{(S^{1/2})_\# \bmeas * \gaus}{\gaus}.
    \end{align}
    Then, for all $S \in \psd^d$, the sequence $(H_n)_{n \in \bbN}$ satisfies
    \begin{align}
        \abs{H_n(S) - H(S)} \to 0
    \end{align}
    pointwise as $n \to \infty$.
\end{lem}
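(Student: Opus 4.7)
My plan is to prove the convergence through an intermediate i.i.d.\ quantity, namely
$\tilde H_n(S) \coloneqq \div{(S^{1/2})_\# \pi_n * \gaus}{\gaus}$, which by tensorization of relative entropy equals $\frac{1}{n}\div{(S^{1/2} \otimes \id_n)_\# \pi_n^{\otimes n} * \gaus}{\gaus}$: the normalized relative entropy of the model in which the entries of $\bmtheta$ are drawn i.i.d.\ from $\pi_n$ rather than being constrained to be a random permutation of $\bmu$. I would then establish $\tilde H_n(S) \to H(S)$ and $H_n(S) - \tilde H_n(S) \to 0$ separately.

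The first convergence is a direct application of Lemma~\ref{lem:moment_bound}: an optimal $W_2$-coupling $(U,V)$ with $U \sim \pi_n$ and $V \sim \bmeas$ gives $\ex\norm{S^{1/2}U - S^{1/2}V}^2 \to 0$, while the multiplicative prefactor in the lemma is controlled by the finite second moment of $\bmeas$ together with $W_2$-convergence of second moments.

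The second convergence is the substantive step. The idea is to construct a joint coupling $(\bmtheta, \bmtheta^{\circ})$ with marginals $\bmeas^n$ and $\pi_n^{\otimes n}$ satisfying $\frac{1}{n}\ex\norm{\bmtheta - \bmtheta^{\circ}}^2 \to 0$; Lemma~\ref{lem:moment_bound} applied with $u = (S^{1/2} \otimes \id_n)\bmtheta$ and $v = (S^{1/2} \otimes \id_n)\bmtheta^{\circ}$ then gives $|n H_n - n \tilde H_n| = o(n)$. The coupling is built as follows. Draw $\bmtheta'' = (u_{J_1},\dots,u_{J_n})$ with $J_i \sim_\iid \unif([n])$, and solve the optimal assignment between the multisets $(u_{J_i})_i$ and $(u_i)_i$ to obtain $\sigma \in \sym([n])$ achieving $\sum_i \norm{u_{\sigma(i)} - u_{J_i}}^2 = n W_2^2(\pi_n, \hat\pi_n)$, where $\hat\pi_n$ is the empirical measure of $\bmtheta''$. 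Draw an independent uniform permutation $\tau \sim \unif(\sym([n]))$ and set $\bmtheta \coloneqq \tau\sigma\bmu$ and $\bmtheta^{\circ} \coloneqq \tau\bmtheta''$. Since $\tau$ is uniform and independent of $\sigma$, the composition $\tau\sigma$ is uniform on $\sym([n])$ and hence $\bmtheta \sim \bmeas^n$; by exchangeability of $\pi_n^{\otimes n}$, $\bmtheta^{\circ}$ retains its i.i.d.\ law. Because $\tau$ acts by coordinate permutation it preserves Euclidean norm, so $\norm{\bmtheta - \bmtheta^{\circ}}^2 = \norm{\sigma\bmu - \bmtheta''}^2 = n W_2^2(\pi_n, \hat\pi_n)$.

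The remaining technical ingredient is $\ex[W_2^2(\pi_n, \hat\pi_n)] \to 0$. The triangle inequality gives $W_2^2(\pi_n, \hat\pi_n) \leq 2 W_2^2(\pi_n, \bmeas) + 2 W_2^2(\bmeas, \hat\pi_n)$; the first term vanishes by hypothesis, and the second reduces, through an optimal coupling between i.i.d.\ $\pi_n$-samples and i.i.d.\ $\bmeas$-samples, to bounding the expected $W_2^2$-distance between an i.i.d.\ $\bmeas$-empirical measure and $\bmeas$ itself, which tends to zero by classical empirical convergence results. I expect the main obstacle to be the coupling step itself: a naive coordinate-wise coupling between the permutation and i.i.d.\ laws gives $\ex\norm{\bmtheta - \bmtheta^{\circ}}^2 = \Theta(n)$ and therefore only an $O(1)$ bound via Lemma~\ref{lem:moment_bound}; the internal re-ordering via the assignment $\sigma$ combined with the symmetrization $\tau$ is what reconciles the distinct marginal constraints with the sub-linear cost.
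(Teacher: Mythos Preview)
Your proposal is correct and follows essentially the same strategy as the paper: apply Lemma~\ref{lem:moment_bound} to reduce the entropy difference to an $L^2$ coupling cost, build the coupling via an optimal assignment between the fixed atoms $\bmu$ and an i.i.d.\ sample, and close with the empirical-measure convergence $\ex[W_2^2(\cdot,\bmeas)]\to 0$ (the paper cites \cite{fournier2015}). The only cosmetic difference is that the paper couples $\bmeas^n$ directly with $\bmeas^{\otimes n}$ in a single step, whereas you route through the intermediate $\pi_n^{\otimes n}$ and make the symmetrization via an independent uniform $\tau$ explicit; both variants rest on the same three ingredients.
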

\begin{proof}
    Let $v_i \sim_\iid \bmeas$, and for $n \in \bbN$ define $\bmv \coloneqq (v_1,\dots,v_n) \sim \bmeas^{\otimes n}$. Then, for any $n \in \bbN$, it holds that
    \begin{align}
        \frac{1}{n}\div[\big]{(S^{1/2} \otimes \id_n)_\#\bmeas^{\otimes n} * \gaus}{\gaus} = \frac{1}{n} \sum_{i=1}^n \div[\big]{(S^{1/2})_\#\bmeas * \gaus}{\gaus} = H(S)
    \end{align}
    for all $S \in \psd^d$, so that
    \begin{align}
        \frac{1}{n} \abs[\big]{\div[\big]{(S^{1/2} \otimes \id_n)_\#\bmeas^n * \gaus}{\gaus} - \div[\big]{(S^{1/2} \otimes \id_n)_\#\bmeas^{\otimes n} * \gaus}{\gaus}} = \abs[\big]{H_n(S) - H(S)}.
    \end{align}
    Letting $\bmtheta \sim \bmeas^n$, we see by Lemma~\ref{lem:moment_bound} that for any joint distribution of $(\bmtheta,\bmv) \sim \gamma \in \Gamma(\bmeas^n, \bmeas^{\otimes n})$ it holds
    \begin{align}\label{eq:em_entropy_bound}
        \abs[\big]{H_n(S) - H(S)}
        \leq \frac{1}{n} \norm{S}_{\op} ^2\Big( \sqrt{ \bbE\| \bmv  \|^2\vee\bbE\| \bmw  \|^2} \sqrt{\bbE \|\bmv  - \bmw\|^2} + \frac{1}{2}\bbE \|\bmv  - \bmw \|^2 \Big).
    \end{align}
    Clearly, $\frac{1}{n}\max\set{\bbE\| \bmtheta  \|^2,\bbE\| \bmv  \|^2}$ is bounded since $\bmeas$ has bounded second moment. Furthermore, we consider an explicit coupling of $(\bmtheta,\bmv)$ by permuting the entries of $\bmv$ by some $\sigma^* \in \sym([n])$ such that
    \begin{align}
        \norm{\bmv - \sigma^*\bmw}^2 = \min_{\sigma \in \sym([n])}\norm{\bmv -\sigma \bmw}^2.
    \end{align}
    Conditionally on $\bmv$, then, $\sigma^*$ constitutes the optimal coupling between the empirical measures $\pi_n$ of $\bmtheta$ and and $\nu_n$ of $\bmv$, that is
    \begin{align}
        \frac{1}{n} \norm{\bmtheta - \sigma^*\bmv}^2 = W_2^2(\pi_n,\nu_n),
    \end{align}
    for all $n \in \bbN$. 
    Since $W_2^2(\pi_n, \bmeas) \to 0$ by assumption and $\ex{W_2^2(\nu_n, \bmeas)} \to 0$ (see e.g.~\cite[Theorem 1]{fournier2015}), it follows that the right hand side of~\eqref{eq:em_entropy_bound} converges to zero as $n \to \infty$ pointwise for all $S \in \psd^d$.
\end{proof}

\subsection{Proof of Theorem~\ref{th:qap}}
As a preliminary remark, we note that for the $q$-adic assignment problem under Condition~\ref{as:qap} satisfies the free energy concentration required by Condition~\ref{as:Flimit}. This can be shown by noting that the noise component of the observation model is invariant to permutation of the indices and the random permutation is independent of everything else in the model. Therefore, free energy concentration can be shown by concentration of the free energy for a decoupled model in which $\bmtheta$ is drawn i.i.d. from $\bmeas$, which we recall is taken to have compact support.

To start the proof, Lemma~\ref{lem:entropy_permutation_limit} immediately gives that, for every $d \in \bbN$, the pointwise limit of the scalar channel relative entropy $H_n$ is given by the single-letter relative entropy $H : \psd^d \to \bbR$,
\begin{align}
    H(S) \coloneqq \div[\big]{(S^{1/2})_\#\bmeas * \gaus}{\gaus},
\end{align}
which is continuously differentiable (and in fact infinitely differentiable, see e.g.~\cite{guo2011}). Fix any two $\eps,\eps' > 0$. Under Condition~\ref{as:qap}, we can combine Theorems~\ref{th:approx} and~\ref{th:Flimit} to conclude that there exist $d_\eps,d_{\eps'} \in \bbN$ depending on $\eps$ and $\eps'$, respectively, such that
\begin{align}
    \limsup_{n \to \infty} \;\abs{F_n - F^{(\eps)}} \leq \eps \quad \text{and} \quad \limsup_{n \to \infty} \;\abs{F_n - F^{(\eps')}} \leq \eps' .
\end{align}
Here we defined $F^{(\eps)}$ and $F^{(\eps')}$ to be the asymptotic free energies of approximating models $\bmy^{(\eps)}$ and $\bmy^{(\eps')}$ defined as in~\eqref{eq:y_eps}. This implies that for each $\delta > 0$ there exist $\bar\eps$ such that for all $0 < \eps,\eps' < \bar \eps$ it holds
\begin{align}
    \abs{F^{(\eps)} - F^{(\eps')}} \leq \delta.
\end{align}
This implies that $(F^{(\eps)})_{\eps > 0}$ is a Cauchy net, so that by completeness $F^{(\eps)} \to F$ for some value $F$ as $\eps \to 0$. Since $(F^{(\eps)})$ admits a limiting value $F$, then, it must be the case that
\begin{align}
    \lim_{n \to \infty} \abs{F_n - F} = 0,
\end{align}
from which Theorem~\ref{th:qap} follows.

\section{Proof of auxiliary results} \label{ap:approx_lemmas}
\subsection{Proof of Lemma~\ref{lem:F_invariance}}
We first state and prove the following orthogonality result.
\begin{lem} \label{lem:entropy_proj}
Let $\bmeta =[\eta_1, \dots, \eta_n]$ be a $d \times n$ matrix and $\bmz$ be a $d \times n$ standard Gaussian matrix independent of $\bmeta$. For $S \in \psd^d$, Define
\begin{align}
    I(S) \coloneqq I(\bmeta ; S^{1/2} \bmeta + \bmz ).
\end{align}
Let  $U_1, \dots, U_k$ be a collection of $d \times d$ orthogonal projection matrices satisfying  $U_jU_{j'}= 0$ for all $j \neq j' \in [k]$. And define a projector $\cP : \bbR^{d \times d} \to \bbR^{d \times d}$ as
\begin{align}
    \cP(M) = \sum_{j=1}^k U_j M U_j.
\end{align}
Then, if there exists mapping $\psi \colon [n] \to [k]$ such that $U_{\psi(i)} \theta_i = \theta_i$ almost surely for $i = 1, \dots, n$,
\begin{align}
    I(S) = I(\cP(S)) \quad \text{for all $S \in \psd^d$}. 
\end{align}
\end{lem}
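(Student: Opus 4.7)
The plan is to reduce the claim to the standard sufficient-statistic fact that, for a Gaussian channel $y = Ax + z$ with $z \sim \normal(0,\id_d)$ and $A \in \bbR^{d \times d}$, the mutual information $I(x;y)$ depends on $A$ only through the Gram matrix $A^\top A$. Indeed, if $A^\top A = B^\top B$, then by polar decomposition there is an orthogonal $O$ with $OA = B$; left-multiplying $y$ by $O$ is a deterministic invertible map that preserves the noise distribution and turns the signal from $Ax$ into $Bx$, hence preserves mutual information.

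Since the columns of $\bmz$ are independent, I would apply this reduction column by column to $y_i = S^{1/2}\eta_i + z_i$. The hypothesis $U_{\psi(i)}\eta_i = \eta_i$ almost surely lets me write $y_i = (S^{1/2} U_{\psi(i)}) \eta_i + z_i$, so that the effective Gram matrix for column $i$ is
\begin{align}
    (S^{1/2} U_{\psi(i)})^\top (S^{1/2} U_{\psi(i)}) = U_{\psi(i)} S U_{\psi(i)}.
\end{align}
Repeating this for the model with $S$ replaced by $\cP(S)$ gives the effective Gram $U_{\psi(i)} \cP(S) U_{\psi(i)}$.

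The key algebraic step is then to verify
\begin{align}
    U_{\psi(i)} \cP(S) U_{\psi(i)} = U_{\psi(i)} S U_{\psi(i)},
\end{align}
which follows directly from the assumptions $U_j U_{j'} = 0$ for $j \neq j'$ and $U_j^2 = U_j$: expanding $\cP(S) = \sum_j U_j S U_j$ and sandwiching by $U_{\psi(i)}$ collapses all cross terms, leaving only the summand $j = \psi(i)$.

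To assemble the argument, for each $j \in [k]$ I would pick an orthogonal $O_j$ satisfying $O_j S^{1/2} U_j = \cP(S)^{1/2} U_j$ (possible since the two matrices have matching Gram matrices by the identity above) and jointly apply $O_{\psi(i)}$ to each column $y_i$. This column-wise rotation is deterministic and invertible, so it preserves $I(\bmeta; \bmy)$ while transforming the observation into a matrix with the same joint distribution as $\cP(S)^{1/2}\bmeta + \bmz'$, where $\bmz'$ is again a standard Gaussian matrix. I do not expect a real obstacle here; the only point requiring care is that the $O_j$ depend only on the fixed data $(S, U_j)$ and not on $\bmeta$, so that applying $O_{\psi(i)}$ to $y_i$ is a legitimate post-processing of $\bmy$.
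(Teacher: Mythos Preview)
Your argument is correct and takes a genuinely different route from the paper. The paper proves the lemma via the I-MMSE relation: it writes
\begin{align}
    I(S) - I(\cP(S)) = \sum_{i=1}^n \int_0^1 \big\langle \ex{\cov{\eta_i \mid S_\lambda^{1/2}\bmeta + \bmz}},\, S - \cP(S)\big\rangle \,\dd\lambda,
\end{align}
then uses $U_{\psi(i)}\eta_i = \eta_i$ to insert $U_{\psi(i)}$ on both sides of the conditional covariance and observes that $U_{\psi(i)}(S-\cP(S))U_{\psi(i)} = 0$, so the integrand vanishes identically. Your approach instead exhibits an explicit deterministic invertible map (the column-wise rotation by $O_{\psi(i)}$) that carries the $S$-channel to the $\cP(S)$-channel while preserving the Gaussian noise law, which is more elementary and avoids differentiation altogether. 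The I-MMSE route has the advantage of giving a formula for the difference $I(S)-I(S')$ for arbitrary $S'$, which fits the paper's broader use of derivatives of the free energy; your sufficient-statistic argument is cleaner for this specific equality but would not directly yield such a quantitative comparison. One small point worth making explicit in your write-up: the columns $\eta_i$ need not be independent, so the ``column by column'' language should be read only as describing the form of the bijection on $\bmy$, not as a factorization of the mutual information---your final paragraph already handles this correctly.
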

\begin{proof}
    For $S, S'\in \psd^d$ the I-MMSE relation yields 
\begin{align}
I(S) - I(S') = \sum_{n=1}^N \int_0^1 \inner[\big]{ \ex{ \cov{\eta_n \mid S_{\lambda}^{1/2}\bmeta + \bmz }}}{S-S'} \dd\lambda
\end{align}
where $S_\lambda \coloneqq \lambda S + (1- \lambda) S'$. 
Evaluating with $S' = \cP(S)$, we see that 
\begin{align}
I(S) - I(\cP(S)) & =\sum_{i=1}^n \int_0^1 \inner[\big]{ \ex{ \cov{\eta_i \mid S_{\lambda}^{1/2}\bmeta  + \bmz }}}{S-\cP(S)} \dd\lambda \\
& =\sum_{i=1}^n \int_0^1 \langle U_{\psi(i)} \ex{  \cov{\eta_i \mid S_\lambda^{1/2}\bmeta  + \bmz }} U_{\psi(i)},  S - \cP(S)  \rangle \dd\lambda \\
& =\sum_{i=1}^n \int_0^1 \langle \ex{  \cov{\eta_i \mid S_{\lambda}^{1/2}\bmeta  + \bmz }} ,  U_{\psi(i)} (S-\cP(S))U_{\psi(n)}  \rangle \dd\lambda \\
&= 0
\end{align}
where the last step holds because $U_j S U_j = U_j \cP(S) U_j$ for all $k = 1, \dots, k$. 
\end{proof}

\paragraph{Proof of Lemma~\ref{lem:F_invariance}.} 
For the proof, we begin by noting that, from orthogonal invariance of Gaussian distributions (see e.g.~\cite{reeves2018}), the model~\eqref{eq:y_lift} is statistically equivalent to the symmetrized observations 
\begin{align}
    y_\alpha' = \sqrt{\frac{t}{n^{q-1}}} (\widehat{A}^\top\widehat{A})^{1/2} \: (\widehat \eta_{\alpha_1} \otimes \dots \widehat \eta_{\alpha_q}) + z_\alpha', \quad \alpha \in [n]^q, 
\end{align}
for $z_\alpha' \sim_\iid \normal(0,\id_{dk})$ independent of everything else.
By observing that the matrices $U_j$ are orthogonal projections such that $U_\kappa U_{\kappa'} = 0$ for all $\kappa \neq \kappa' \in [k]^q$ and furthermore $U_{\psi(x_{\alpha_1})\dots\psi(x_{\alpha_q})} (\widehat \eta_{\alpha_1} \otimes \dots \widehat \eta_{\alpha_q}) = (\widehat \eta_{\alpha_1} \otimes \dots \widehat \eta_{\alpha_q})$, we can appeal to Lemma~\ref{lem:entropy_proj} above as well as identity~\eqref{eq:I_F_identity} to deduce the desired result.

\subsection{Proof of Lemma~\ref{lem:potential_diagonal}}
For any $\widehat Q, \widehat S \in \psd^{dk}$, by definition of the trace inner product and Lemma~\ref{lem:entropy_proj} above we have
\begin{align}
    \widehat H(\cP_1(\widehat{S})) - \frac{1}{2} \inner{\cP_1(\widehat{Q})}{\cP_1(\widehat{S})} = H(S) - \frac{1}{2}\sum_{j=1}^k \inner{Q_j}{S_j},
\end{align}
where we recall that $H : (\psd^d)^k \to \bbR$ is the pointwise limit of the linear channel relative entropy $H_n$ we defined in~\eqref{eq:entropy_linear}.
For any $q \in \bbN$, then, we observe that from the identity~\eqref{eq:kron_commute} there exists an orthogonal matrix $\Pi_q$ such that
\begin{align}
    \cP_1(\widehat{Q})^{\otimes q} &= \Big( \sum_{k=1}^K U_k \widehat{Q} U_k \Big)^{\otimes q} \\
    &= \Big( \sum_{j=1}^k e_je_j^\top  \otimes Q_j \Big)^{\otimes q} \\
    &= \sum_{\kappa \in [k]^q} \left( \Pi_q^\top \left( e_{(\kappa)}e_{(\kappa)}^\top \otimes Q^{\otimes \kappa} \right)\Pi_q \right) 
\end{align}
where we defined $e_{(\kappa)} \coloneqq e_{\kappa_1} \otimes \cdots \otimes e_{\kappa_p}$ for all $\kappa \in [k]^q$. Then, from orthogonality of the bases $e_{(\kappa)}$ and the definition of $\widehat{A}$ it follows that
\begin{align}
    \inner{\cP_q(\widehat{A}^\top \widehat{A})}{\cP_1(\widehat{Q})^{\otimes q}} &= \inner[\Big]{\sum_{\kappa \in[k]^p} \Pi_q^\top \big( e_{(\kappa)}e_{(\kappa)}^\top \otimes A_\kappa^\top A_\kappa \big) \Pi_q }{\sum_{\kappa \in [k]^q} \Pi_q^\top \big( e_{(\kappa)}e_{(\kappa)}^\top  \otimes Q^{\otimes \kappa} \big) \Pi_q} \\
    &= \sum_{\kappa \in [k]^q} \inner[\Big]{A_\kappa^\top A_\kappa}{Q^{\otimes \kappa}}.
\end{align}
Therefore, we conclude that
\begin{align}
\Psi^\cP (\cP_1(\widehat{Q}), \cP_1(\widehat{S})) = H(S) - \frac{1}{2} \sum_{j=1}^k \langle S_j , Q_j \rangle + \frac{1}{2} \sum_{\kappa \in [k]^{q}} \inner[\Big]{A_\kappa^\top A_\kappa}  {Q^{\otimes \kappa}} = \Psi(Q,S).
\end{align}

\end{document}